\newcommand{\vertiii}[1]{{\left\vert\kern-0.25ex\left\vert\kern-0.25ex\left\vert #1 
    \right\vert\kern-0.25ex\right\vert\kern-0.25ex\right\vert}}
\newtheorem{theorem}{Theorem}[section]
\newtheorem{lemma}[theorem]{Lemma}
\newtheorem{proposition}[theorem]{Proposition}
\newtheorem{corollary}[theorem]{Corollary}
\newtheorem{remark}[theorem]{Remark}
\numberwithin{equation}{section}
\definecolor{green}{rgb}{0.0, 0.5, 0.5}
\definecolor{lgray}{gray}{0.9}
\definecolor{llgray}{gray}{0.95}
\definecolor{lllgray}{gray}{0.975}
\newcommand{\R}{\mathbb{R}}
\newcommand{\cE}{\mathcal{E}}
\newcommand{\cF}{\mathcal{F}}
\newcommand{\cG}{\mathcal{G}}
\newcommand{\nc}{\newcommand}
\nc{\al}{\alpha}
\nc{\del}{\delta}
\nc{\h}{\delta}
\nc{\G}{\Gamma}
\nc{\et}{\eta} 
\nc{\g}{\gamma}
\nc{\gam}{\gamma}
\nc{\ka}{\kappa}
\nc{\lam}{\lambda}
\nc{\Lam}{\Lambda}
\nc{\Om}{\Omega}
\nc{\om}{\omega}
\nc{\ta}{\tau}
\nc{\w}{\omega}
\nc{\io}{\iota}
\nc{\z}{\zeta}
\nc{\s}{\sigma}
\nc{\Si}{\Sigma}
\nc{\vphi}{\varphi}
\nc{\e}{\epsilon}
\nc{\ran}{\rangle}
\nc{\lan}{\langle}
\newcommand{\ra}{\rightarrow}
\newcommand{\ls}{\lesssim}
\newcommand{\gs}{\gtrsim}
\newcommand{\one}{\mathbf{1}}
\newcommand{\supp}{\operatorname{supp}}
\newcommand{\tr}{\mathrm{Tr}}
\newcommand{\dist}{\mathrm{dist}}
\nc{\bfone}{{\bf 1}}
\newcommand{\p}{\partial}
\newcommand{\DETAILS}[1]{}
 \title[On Lieb-Robinson bounds for the Bose-Hubbard model]{On Lieb-Robinson bounds for \\ the Bose-Hubbard model}
  \author{J\'er\'emy Faupin}
\address{Institut Elie Cartan de Lorraine, Universit\'e de Lorraine, 57045 Metz Cedex 1, France}
 \email{jeremy.faupin@univ-lorraine.fr}
 \author{Marius Lemm}
 \address{Department of Mathematics, University of T\"ubingen, 72076 T\"ubingen, Germany}
 \email{marius.lemm@uni-tuebingen.de}
 \author{Israel Michael Sigal}
\address{Department of Mathematics, University of Toronto, 
Toronto, M5S 2E4, 
Canada}
 \email{im.sigal@utoronto.ca}
\begin{document}

\maketitle
\bibliographystyle{abbrv}
\begin{abstract}We consider the dynamics of the Bose-Hubbard model on general lattices and prove a Lieb-Robinson bound for observables whose supports are separated by an initially almost particle-free region. 
We further obtain a maximal velocity bound for particle transport through an initially empty region which also applies to long-range hopping.  Our techniques originate in the proofs of maximal velocity bounds for Schr\"odinger operators and scattering theory in non-relativistic QED. 
\end{abstract}

\section{Introduction}

Finiteness of the speed of  quantum transport 
(e.g.\ of particles and 
 perturbations) is at the root of our perception of the physical world. It is a central underpinning of the general theory that is also important  for the practical design of quantum devices. It is often posited 
 that the evolving quantum states   spread instantaneously. This   refers to the evolution of their supports.  
   Since   Quantum Mechanics is an intrinsically probabilistic theory, a natural question  
  to ask  is then whether information 
   propagates at finite speed with 
    high probability, rather than with probability $1$.  
  
    It was shown first by Lieb and Robinson (\cite{LR}) in Quantum Statistical Mechanics (QSM) 
   that quantum correlations propagate at most with finite speed  up to exponentially small errors (in $d-ct$, where $d$ is the distance to the source). This property is referred to as   the existence of an effective propagation cone (or, by analogy with relativistic theory, `light cone').

Lieb and Robinson's result stimulated considerable activity and led to a number of decisive applications in diverse fields, e.g., Hastings' famous proof of the area law for the entanglement entropy in gapped spin chains \cite{HastingsAL}; see also \cite{ElMaNayakYao, NachSim, NSY} 
 for reviews and further references. Originally proved for spin systems with bounded and finite range interactions,  the Lieb-Robinson bound was improved in \cite{Hastings,NachSim04}  
 and has by now been extended to infinite range interactions (see \cite{ElMaNayakYao, FossEtAl, MatKoNaka} and references therein),  to lattice oscillator models, which {in contrast to spin systems have an infinite-dimensional local Hilbert space (\cite{NachRazSchlSim, CrSerEis}), spin systems with dissipation (\cite{Pou,NVZ}) and, recently, to certain continuous models   (\cite{GebNachReSims}). 
For bosonic lattice gases, existence of the light cone and finiteness of the influence domain are more subtle and have been shown for special initial conditions in \cite{SHOE} and very recently in \cite{KuwSaito,YinLuc} (see also \cite{WaHazz})  for the key 
Bose-Hubbard model. See below for more detailed description of these results. 

Independently,  Sigal and Soffer (\cite{SigSof}) have shown that in Quantum Mechanics (QM) the `essential support' of the wave functions, i.e. the support up to negligible probability, spreads  with finite speed. 
The result of \cite{SigSof} was improved in \cite{Skib, HeSk, APSS} and extended to photons coupled to matter   in \cite{BoFauSig}. The approach of 
 \cite{SigSof, Skib, HeSk, APSS, BoFauSig} is  based on the method of differential inequalities for propagation observables and commutator expansions. It is fairly different from  the existing approaches in the literature on Lieb-Robinson bounds.
 
  In this paper, we bridge the gap 
   between these two independent lines of inquiry, in QSM and QM, and  we extend the QM approach of  \cite{SigSof, Skib, HeSk, APSS, BoFauSig} to QSM models  with unbounded interactions.     Specifically, combining with a new localization technique,  we establish the existence of an effective light cone for quantum transport for  the  Bose-Hubbard model through \textit{initially unoccupied regions,} that is, the particles are initially localized within spatially separated domains. Our results go beyond  the seminal earlier work \cite{SHOE}. 
   
We also discuss \textit{applications} of our bosonic Lieb-Robinson bound. For example, we show that the time evolution of initially localized observables (which in principle spreads across the entire lattice for any positive time $t>0$) can be approximated by purely local observables for short times. See Theorem \ref{thm:localapprox} for the precise statement. The analogous result for spin systems is central to prove an LPPL principle (``local perturbations perturb locally'') which in turn underpins the modern classification theory of topological quantum phases via Hastings spectral flow (also called quasi-adiabatic evolution) \cite{BMNS,Hastings}. Our results thus pave the way for developing the first bosonic theory of topological quantum phases. The bounds are accompanied by suitable particle number-dependent weights described below, a feature which is generally expected for bosonic systems.

As a second concrete application, we show how Theorem \ref{thm:localapprox} implies a bound on the information-theoretic task of \textit{quantum state transfer} in the Bose-Hubbard model. For this, we follow the line of argumentation developed in the context of quantum spin systems \cite{BHV,EppWhal} and replace the standard Lieb-Robinson bound by our bosonic version.
  
  The ideas developed in this paper also play an important role in our forthcoming paper \cite{FLS} which proves the first bound on the speed of macroscopic particle transport in the Bose-Hubbard model when starting from general initial states. While the techniques are related, the physical content of the two results are fundamentally different: The result proved here is a bona fide Lieb-Robinson bound which controls commutators of very general local observables and it requires a natural assumption on the initial distribution of particles. For comparison, the result in \cite{FLS} holds for general initial states but only controls macroscopic fractions of particles.

\subsection{Model and first main result}
We give the basic definitions and then state a special case of our main result, Theorem \ref{thm:MVE-SM} below. After discussing its interpretation and some applications, we provide a substantially more general result, Theorem \ref{thm:MVE-SM2}.

Let $\Lambda \subset \mathcal{L}$ be a subset of a lattice $\mathcal{L}\subset\mathbb{R}^d$. We use the Euclidean metrics and assume that nearest neighbors are separated by a distance $\ge1$.  We consider a system of bosons on $\Lambda$ described by the generalized Bose-Hubbard Hamiltonian
\begin{align}\label{HLam} H_\Lam &:= - \sum_{x\in\Lambda,y\in\Lambda} J_{xy} b_x^*b_y+\frac{g}2 \sum_{x\in\Lam}n_x(n_x-1)-\mu \sum_{x\in\Lam}n_x ,
\end{align} 
with $g>0$ and $\mu\in\R$. Here $b_x,b^*_x$ are the usual bosonic annihilation and creation operators, acting on the bosonic Fock space $\mathcal{F}$ over $\ell^2(\mathcal{L})$, and $n_x:=b^*_xb_x$ is the number operator at $x$. We will assume that there is an integer $p\ge 1$  such that
\begin{equation}\label{eq:kappa_J}
\kappa_J^{(p)} := \max_{x\in\Lambda}\sum_{y\in\Lambda}|J_{xy}||x-y|^{p} < \infty.
\end{equation}
The standard Bose-Hubbard Hamiltonian with nearest-neighbor hopping is $H_\Lambda$ with $J_{xy} = J \delta_{x\sim y}$ and $J\in\R$, where $x\sim y$ means that $x$ and $y$ are neighbors in $\Lambda$.

Let $N$ denote  the number operator, $N := \sum_{x\in\Lam} n_x$. As can be easily checked (see Appendix \ref{sec:SA},  Corollary \ref{cor:HN-com}), it commutes with $H_\Lambda$, which shows that  $H_\Lambda$ is self-adjoint as an infinite direct sum of self-adjoint bounded operators even in the case when $\Lambda$ is infinite, see Proposition \ref{prop:sa} below.

We write $\mathcal{D}(A)$ for the domain of an operator $A$.  To shorten notations, we write $H\equiv H_\Lambda$. We recall that  
the evolution of observables (self-adjoint bounded operators on the Fock space $\cF$)  is given by
\begin{equation*}
A_t:=\alpha_t(A)\equiv e^{-itH}Ae^{itH}.
\end{equation*}

One says that a (possibly unbounded) observable $A$ is \textit{localized} (or \textit{supported}) in a set $X\subset\Lambda$ iff $[A, b_x]=0$ for any $ x\in X^c:=\Lam\setminus X$.
For $x_0\in\Lambda$ and $b>0$, we use the notation
\begin{equation*}
\mathcal{B}(x_0,b):=\{ x \in \Lambda \, |\, |x-x_0|<b \}.
\end{equation*}
 Furthermore, we define the key quantity
\begin{equation}\label{nu}
\kappa:=\kappa_J^{(1)} \equiv \max_{x\in\Lambda}\sum_{y\in\Lambda}|J_{xy}||x-y|.
\end{equation}

Our main result provides an effective light cone for the propagation of information in the Bose-Hubbard model. The quantity $\kappa$ from above bounds the maximal speed of sound.

\begin{theorem}[Lieb-Robinson bound]\label{thm:MVE-SM}
Suppose that \eqref{eq:kappa_J} is satisfied for some integer $p\ge 1$. Let $c>\kappa$, $b>0$, $\delta_0>0$, $\rho>(1+\delta_0)b$ and $A,B$ be observables commuting with $N$ and localized in $\mathcal{B}(0,b)$ and $\mathcal{B}(0,2\rho-b)^c$, respectively. Let
 $\varphi\in\mathcal{D}(N^{\frac12})\subset\cF$ be such that
 \begin{equation}
n_x \varphi = 0,\qquad \forall x \in \Lambda, \textnormal{ with }b \le | x | \le 2\rho - b .\label{eq:hyp:empty}
\end{equation}
Then, for all $0\le t \le (2c)^{-1}(\rho-b)$,
 \begin{align} 
&\big|\big \langle \varphi , [ A_t, B] \varphi \big \rangle \big |  \le C t (\rho-b)^{1-p} \|A\|\|B\| \big\langle \varphi , N \varphi \big\rangle , \label{max-vel-estSM}
\end{align}
where $C$ is a positive constant depending on $p$, $c$ and $\delta_0$.
\end{theorem} 

For $p>2$,  \eqref{max-vel-estSM} shows that, with the probability approaching $1$, as $t\ra \infty$ an evolving observable, $A_t$, remains uncorrelated to (i.e. commuting with)  any other observable supported outside its light cone \[\{x\in \Lambda \,|\, \dist(x, \supp A)\le 2ct\},\] provided the supports of these observables are separated by initially unoccupied regions. In other words, it implies that the maximal speed of quantum propagation is bounded by the number $\kappa$ defined in \eqref{nu}. The assumptions on $A$, $B$ and $\varphi$ can be significantly relaxed as we explain in Theorem \ref{thm:MVE-SM2} below.

Theorem \ref{thm:MVE-SM} is proven in Section \ref{sec:thm2}.

\subsection{Discussion and interpretation}

The bound \eqref{max-vel-estSM} (and \eqref{max-vel-estSM2} below) impose a direct constraint on the propagation of information through the quantum channel defined by the time evolution $\al_{-t}$ of quantum states via the Bose-Hubbard Hamiltonian. For example, (see e.g. \cite{Pou}), assume that Alice at a location $X$ is in possession of a state $\rho$ and an observable $A$ and would like to send a signal through the quantum channel $\alpha_{-t}$ to Bob who is at $Y$ and who possesses the same state $\rho$ and an observable $B$. To send the message ``$1$'', Alice uses $A$ as a Hamiltonian to evolve $\rho$ for a short time, $r$, and then applies the quantum channel, sending Bob the state 
$\alpha_{-t}(\rho_r)=\alpha_{-t}(e^{-iAr}\rho e^{iAr})$. To send the message ``$0$'', Alice simply sends $\alpha_{-t}(\rho)$. To see whether Alice sent ``$0$'' or ``$1$'', Bob computes the difference between the expectations of $B$ in the states $\al_{-t}(\rho_r)$ and $\al_{-t}(\rho)$. Using the approximation $\rho_r=e^{-iAr}\rho e^{iAr}\approx \rho - r i[A, \rho]$, this gives
\begin{align} \label{CommEstPhys}\tr\big[B\al_{-t}(\rho_r)-B\al_{-t}(\rho) \big]&\approx  r\tr\big[B\al_{-t}( i[\rho, A])\big]\notag\\
&= r\tr\big( i[A,\al_{t}(B)] \rho\big).\end{align} 
Taking $\rho=|\varphi\ran\lan \varphi|$, dividing by $r$ and swapping $A$ and $B$ gives  the expression estimated in  Eq. \eqref{max-vel-estSM} or \eqref{max-vel-estSM2}.

Generally speaking, the usefulness of Theorem \ref{thm:MVE-SM} is that one can derive from it all the consequences of standard Lieb-Robinson bounds for the first time in a bosonic context, \textit{provided} that one is interested in (a) bounding expectations instead of norms and (b) restricting to states with some particle-free regions in the sense of \eqref{eq:hyp:empty} (or almost particle-free as described in the generalization below). We emphasize that some compromises along these lines are clearly necessary in the bosonic context.

Our proof of Theorem \ref{thm:MVE-SM} also shows the following result which is of interest in its own right.
Note that the support of an initially localized observable generally spreads over the entire lattice immediately for any $t>0$. Nonetheless, we show that local observables can be approximated by local observables for sufficiently short times. 

\begin{theorem}\label{thm:localapprox}
Under the assumptions of Theorem \ref{thm:MVE-SM}, there exists a local approximation $[A_t]_\rho$ to the time-evolved observable $A_t$ such that $[A_t]_\rho$ is localized in $\mathcal B(0,\rho)$ and, 
\begin{equation}\label{eq:loc}
\big|\big \langle \varphi ,\left(A_t-[A_t]_\rho\right)\varphi \big \rangle \big |  \le C t (\rho-b)^{1-p} \|A\| \big\langle \varphi , N \varphi \big\rangle , 
\end{equation}
for all $0\le t \le (2c)^{-1}(\rho-b)$, where $C$ is as in Theorem \ref{thm:MVE-SM}.
\end{theorem}

The local approximation can be defined explicitly in terms of the time-evolved observable $e^{-itH_{\mathcal{B}(0,\rho)}}Ae^{itH_{\mathcal{B}(0,\rho)}}$, where $H_{\mathcal{B}(0,\rho)}$ is defined as in \eqref{HLam} (see \eqref{eq:def_alpha_trho} for the precise expression of $[A_t]_\rho$). This does not require the introduction of a partial trace, which can be subtle in the infinite-dimensional setting; see also \cite[Lemma 3.2]{BMNS}. The assumptions on $A$ and $\varphi$ can be significantly relaxed, see Theorem \ref{thm:MVE-SM2} below. Theorem \ref{thm:localapprox} is proven in Section \ref{sec:thm2}.

We briefly review the literature. The only Bose system where Lieb-Robinson bounds are available so far (for special initial states) is the one described by  the Bose-Hubbard model.  For initial conditions where all particles are located in a \textit{bounded} set, the existence of the light cone was shown in \cite{SHOE}. In other words, the result of \cite{SHOE} captures transport of particles \textit{arriving} in an empty region. For comparison, Theorem \ref{thm:MVE-SM} also captures transport \textit{through} an empty region.

 In the recent work \cite{WaHazz}, the Lieb-Robinson bound is proven for  the truncated  Bose-Hubbard model with  the Lieb-Robinson speed of the order $\mathcal{O}(\sqrt{\bar N})$, where $\bar N$ is the average number of particles. \cite{KuwSaito} proved that   local excitations of static solutions satisfying certain stringent low-boson-density conditions stay within $\mathcal{O}(t \log^2 t)$-neigbourhood of their initial support.  The recent work \cite{YinLuc} derives a light cone for  $\mathrm{Tr}(e^{-\mu N} [A_t,B])$, i.e., for the special initial states $e^{-\mu N}$.

\begin{remark}\normalfont
\begin{enumerate}[label=(\roman*)]

\item For the standard Bose-Hubbard Hamiltonian with nearest-neighbor hopping $J_{xy} = J \delta_{x\sim y}$, $J>0$, the maximal velocity is given by
\begin{equation*}
\kappa = J \max_{x\in\Lambda} \# \{y\in\Lambda\, | \, x\sim y\}.
\end{equation*}
Moreover \eqref{eq:kappa_J} is satisfied for all $p\ge 1$ and hence \eqref{max-vel-estSM} and \eqref{max-vel-estSM2} hold for all $p\geq 1$.
\item  Theorem  \ref{thm:MVE-SM} implies that the maximal speed of propagation is bounded by the number $\kappa$ defined in \eqref{nu}. Though this result looks natural, the result relies on the implicit energy cut-off baked into the lattice step (as do the Lieb-Robinson bounds published elsewhere). If we introduce a variable lattice step $h$, then the maximal speed would blow up as $h\ra 0$.

One can display the dependence on the energy cut-off explicitly by considering initial conditions of the form $\varphi=g(H_\Lam)\psi$, with say $g\in \mathrm{C}_0^\infty(\R)$ and $\psi\in \cF$. Such an energy cut-off is also necessary in the continuous case: without it, particles and information propagate with infinite speed (cf.  \cite{SigSof, Skib, HeSk, APSS, BoFauSig}). 
\end{enumerate}
\end{remark}

\subsection{Generalizations and further discussion}
Theorems \ref{thm:MVE-SM} and \ref{thm:localapprox} generalize to unbounded observables $A$ and $B$ and initial states $\varphi$ with a small number of particles between the supports of $A$ and $B$. For $b_1,b_2\ge0$, we set
\begin{equation*}
\mathcal{C}_{b_1,b_2}:=\{ x \in \Lambda \,|\, b_1\le|x|\le b_2 \}.
\end{equation*}
We denote by $\mathbb{N}$ the set of positive integers and $\mathbb{N}_0=\mathbb{N}\cup\{0\}$. We consider the following subset of unbounded observables: We say that an unbounded self-adjoint operator $A$ on the bosonic Fock space has a degree at most $\nu_A\in\mathbb{N}_0$ if, for all $n\in\mathbb{N}_0$,
\begin{equation}
\|A\|_{n}:=\big\| (N+1)^{\frac{n}{2}} A (N+1)^{-\frac{n+\nu_A}{2}} \big\|<\infty. \label{eq:compl}
\end{equation}
Note that \eqref{eq:compl} holds if $A$ is polynomial of degree $\le\nu_A$ in $b_x$, $b_x^*$. Since $H$ commutes with $N$, given an observable of degree at most $\nu_A$, the evolution $A_t$ is well-defined on $\mathcal{D}(N^{\nu_A/2})$. To simplify formulas below, we also set
\begin{equation*}
\vertiii{A}_{\nu}:=\max_{0\le n\le\nu+1}\|A\|_n.
\end{equation*}

\begin{theorem}[Lieb-Robinson bound for general observables and states]\label{thm:MVE-SM2}
Suppose that \eqref{eq:kappa_J} is satisfied for some integer $p\ge1$. Let $c>\kappa$, $b>0$, $\delta_0>0$, $\rho>(1+\delta_0)b$, and $A,B$ be self-adjoint observables of degrees at most $\nu_A,\nu_B\in\mathbb{N}_0$, such that $A$ and $B$ are localized in $\mathcal{B}(0,b)$ and $\mathcal{B}(0,2\rho-b)^c$, respectively. For all
 $\varphi\in\mathcal{D}(N^{\frac12(1+\nu_A+\nu_B)})\subset\cF$, we have
 \begin{align} 
&\big|\big \langle \varphi , [ A_t, B] \varphi \big \rangle \big |  \le  C t (\rho-b)^{1-p} \vertiii{A}_{\nu_B} \vertiii{B}_{\nu_A} \langle \varphi , M_\rho \varphi \rangle ,\label{max-vel-estSM2}
\end{align}
for all $0\le t \le (2c)^{-1}(\rho-b)$, where 
\begin{align*}
\quad M_\rho := (N+1)^{\nu_A+\nu_B}\Big( N + \sum_{x\in\mathcal{C}_{b,2\rho-b}} m^{p-1}_\rho(x)  n_x+1\Big) ,
\end{align*}
 $m_\rho(x):=\min(|x|,2\rho-|x|)$ and $C$ is a positive constant depending on $p$, $c$ and $\delta_0$. 
  Moreover, there exists a local approximation $[A_t]_\rho$ to the time-evolved observable $A_t$ such that  $[A_t]_\rho$ is localized in $\mathcal B(0,\rho)$ and
\begin{align}
\big|\big \langle \varphi ,\left(A_t-[A_t]_\rho\right)\varphi \big \rangle \big | 
\le C t (\rho-b)^{1-p} \vertiii{A}_0 \big\langle \varphi , M_\rho \varphi \big\rangle . \label{eq:general_loc}
\end{align}
\end{theorem}

\begin{remark}\normalfont\label{rmk:extensions}
\begin{enumerate}[label=(\roman*)]
\item By polarization, Theorems \ref{thm:MVE-SM} and \ref{thm:MVE-SM2} imply analogous estimates on off-diagonal matrix elements, i.e., $\big \langle \varphi , [ A_t, B] \tilde\varphi \big \rangle$ with $\varphi\neq \tilde\varphi$. In turn, \eqref{max-vel-estSM2} is equivalent to the weighted operator norm bound
\begin{equation*}
\big\|M_\rho^{-\frac12} [ A_t, B] M_\rho^{-\frac12}\big \|  \le  C t (\rho-b)^{1-p} \vertiii{A}_{\nu_B} \vertiii{B}_{\nu_A} ,
\end{equation*}
 which also gives the estimate
\begin{equation*}
 \mathrm{Tr} \big ( [ A_t, B] \gamma \big) \le  C t (\rho-b)^{2-p} \vertiii{A}_{\nu_B} \vertiii{B}_{\nu_A} \mathrm{Tr}\big( M_\rho^{\frac12}\gamma M_\rho^{\frac12} \big),
\end{equation*}
for any trace class operator $\gamma$ such that $\mathrm{Tr}\big( M_\rho^{\frac12}\gamma M_\rho^{\frac12} \big)<\infty$. 

\item By approximate translation invariance, the balls  $\mathcal{B}(0,b)$ and $\mathcal{B}(0,2\rho-b)$ in Theorems \ref{thm:MVE-SM} and \ref{thm:MVE-SM2} can be replaced by $\mathcal{B}(z,b)$ and $\mathcal{B}(z,2\rho-b)$, for any $z\in\Lambda$. 

\item Theorems \ref{thm:MVE-SM} and \ref{thm:MVE-SM2} also holds for non-self-adjoint observables $A$, $B$, replacing the definition \eqref{eq:compl} of $\|A\|_{n}$ by
\begin{equation}
\|A\|_{n}:=\max\big(\big\| (N+1)^{\frac{n}{2}} A (N+1)^{-\frac{n+\nu}{2}} \big\|,\big\| (N+1)^{\frac{n}{2}} A^* (N+1)^{-\frac{n+\nu}{2}} \big\|\big).
\end{equation}
Indeed, by linearity, it suffices to decompose $A=\mathrm{Re}(A)+i\mathrm{Im}(A)$, likewise for $B$, and then apply Theorem \ref{thm:MVE-SM} or.\ref{thm:MVE-SM2}
\item Our result in Theorem \ref{thm:MVE-SM} and \ref{thm:MVE-SM2} hold for times $t$ such that $0\le t < (2c)^{-1}(\rho-b)$. As will follow from our proof, we also have the following estimate, which holds for $t<c^{-1}(\rho-b)$, but with a worse decay rate: for $p\ge1$, 
 \begin{align} 
&\big|\big \langle \varphi , [ A_t, B] \varphi \big \rangle \big |  \le C t (\rho-b)^{\frac{1-p}{2}} \vertiii{A}_{\nu_B}\vertiii{B}_{\nu_A} \langle \varphi,M_\rho\varphi\rangle. \label{max-vel-estSM_2nd}
\end{align}
\item Our proof will show that the expectation value $\langle\varphi,M_\rho\varphi\rangle$ in the right-hand-side of \eqref{max-vel-estSM2} can be replaced by the smaller term
\begin{equation*}
\big\| M_\rho^{\frac12}\varphi\big\| \big\| (M_\rho^{(0)})^{\frac12}\varphi\big\|, \quad\text{with}\quad M_\rho^{(0)} := N + \sum_{x\in\mathcal{C}_{b,2\rho-b}} m^{p-1}_\rho(x)  n_x+1 .
\end{equation*}
\end{enumerate}
\end{remark}

Note that at the quantum energies in nature and laboratories (besides particle accelerators), the maximal speed of propagation implied by our results is much below the speed of light, so the non-relativistic nature of Quantum Mechanics is unimportant here. 

The proof of Theorem \ref{thm:MVE-SM2} is given in Appendix \ref{sec:thm3}.
  
 \subsection{Application: Bound on quantum state transfer}
 We combine our results with information-theoretic techniques from \cite{BHV,EppWhal} to derive a bound on the information-theoretic task of state transfer. 
 
 We recall that \textit{quantum state transfer} describes the task of transfering a quantum state $\gamma$ from a region $X$ to another region $Y$ by applying (i) a state-preparation unitary operator $A$ on a region $X$ and (ii) the Heisenberg time evolution of the whole system, in our case the Bose-Hubbard model.
 
 Following \cite{EppWhal}, we use the figure of merit for the state transfer
\begin{equation}\label{eq:fom}
 F(\mathrm{Tr}_{Y^c} \alpha_t(\gamma),\mathrm{Tr}_{Y^c} \alpha_t(A\gamma A^*))
\end{equation}
where $F(\rho,\sigma)=\|\sqrt{\rho}\sqrt{\sigma}\|_{\mathfrak S^1}$ is the fidelity with $\|\cdot\|_{\mathfrak S^1}$ denoting the trace norm (also called Schatten-$1$-norm). Note that the fidelity between two quantum states (density matrices) equals $1$ if and only $\rho=\sigma$. As explained in \cite{EppWhal}, the availability to effect quantum state transfer on general input states, in particular orthogonal states, requires the fidelity in \eqref{eq:fom} to be small.

Our result in this setting is the following lower bound on  the fidelity of quantum state transfer in \eqref{eq:fom} when $\gamma$ is pure and the time is short compared to the transfer distance.

\begin{corollary}[Quantum state transfer bound]\label{cor:qst}
Let $X=\mathcal B(0,b)$ and $Y=\mathcal B(0,2\rho-b)$. Under the same assumptions as in Theorem \ref{thm:MVE-SM2}, with $A$ a unitary operator localized in $X$, we have with $\gamma=\vert\phi\rangle\langle\phi\vert$,
\begin{equation}\label{eq:qst}
\begin{aligned}
&
F(\mathrm{Tr}_{Y^c} \alpha_t(\gamma),\mathrm{Tr}_{Y^c} \alpha_t(A\gamma A^*))\\
&\geq 1- C t (\rho-b)^{2-p}\vertiii{A}_0
\sup_{0\le t \le (2c)^{-1}(\rho-b)}\big\|M_{\rho}^{\frac12} \alpha_t(\gamma)M_\rho^{\frac12}\big\|_{\mathfrak S^1}.
\end{aligned}
\end{equation}
\end{corollary}

As noted above, this corollary establishes a limit on the best-possible quantum state transfer protocols for the Bose-Hubbard model. Note that 
\begin{equation*}
\big\|M_{\rho}^{\frac12} \alpha_t(\gamma)M_\rho^{\frac12}\big\|_{\mathfrak S^1}=\langle \phi_t, M_\rho \phi_t\rangle
\end{equation*}
 where $ \alpha_t(\gamma)=\vert\phi_t\rangle\langle\phi_t\vert$. The proof of Corollary \ref{cor:qst} is presented in Appendix \ref{app:qst}.

 \subsection{Further dynamical bounds on particle transport}
The proof of Theorems \ref{thm:MVE-SM} and \ref{thm:MVE-SM2} are based on a 
 Fock space localization technique    together with a special case of the following theorem.
   
   We denote by $\chi_S$ the characteristic function of a set $S\subset\Lambda$. Recall that the second quantization of a one-particle operator $a$ on $\ell^2(\Lambda)$ with operator kernel $a_{x,y}$ is given by $\mathrm{d}\Gamma(a):=\sum_{x,y} a_{x,y} b_x^* b_y$. Abusing notations, a function $F:\Lambda\to\mathbb{C}$ is identified with the multiplication operator that acts diagonally on $\ell^2(\Lambda)$ as $Ff(x)=F(x)f(x)$. Hence
   \begin{equation*}
   	\mathrm{d}\Gamma(F)=\sum_{x\in\Lambda}F(x)b_x^*b_x.
   \end{equation*}
 If $F=\chi_S$ with $S\subset\Lambda$, we also set
 \begin{equation*}
 N_S:=\mathrm{d}\Gamma(\chi_S) = \sum_{x\in S}n_x.
 \end{equation*}
 For any initial state $\psi_0\in\mathcal{F}$, we denote  by $\psi_t:=e^{-itH}\psi_0$ the solution to the Schr\"odinger equation $i\partial_t\psi_t=H\psi_t$.
\begin{theorem}[Particle propagation bound]\label{thm:max-vel-est}
Suppose that \eqref{eq:kappa_J} is satisfied for some integer $p \ge 1$. For all $c > \kappa$, $\delta_0>0$ and all integers $n\le p-1$, there exists $C>0$ such that, for all $b>0$, $\rho > b+\delta_0$ and $\psi_0\in\mathcal{D}(N^{\frac12})\subset\mathcal{F}$,
\begin{align}
&\sup_{0\le t<c^{-1}(\rho-b)} \big \langle \psi_t, N_{|x|> \rho} \psi_t \big \rangle \notag \\
& \quad\le \big(1+C(\rho-b)^{-1}\big) \big\langle\psi_0,N_{|x|>b} \psi_0\big\rangle +  C (\rho-b)^{-n}  \langle\psi_0,N\psi_0\rangle,\label{max-vel-est} 
\end{align}
and
\begin{align}
&\sup_{0\le t<c^{-1}(\rho-b)}\big \langle \psi_t, N_{|x|< b} \psi_t \big \rangle \notag \\
&\quad\le \big(1+C(\rho-b)^{-1}\big) \big\langle\psi_0,N_{|x|<\rho}\psi_0\big\rangle +  C (\rho-b)^{-n}  \langle\psi_0,N\psi_0\rangle. \label{max-vel-est_2}
\end{align}
\end{theorem}
Equation \eqref{max-vel-est} shows that the expectation of the number of particles in the region $\{|x|>\rho\}$ in the evolved state $e^{-itH}\psi_0$ does not exceed the number of particles initially in the region $\{|x|>b\}$, up to small remainder terms. In other words, the probability that particles are transported from $\{|x|\le b\}$ to $\{|x|>\rho\}$ is small for all times $t$ satisfying $b+ct<\rho$. 

Note that (a) by translation invariance, $x$ can be replaced by $x-z$ in \eqref{ex:max-vel-est_3} and \eqref{ex:max-vel-est_4}, for any $z\in\Lambda$,  and (b) Theorem \ref{thm:max-vel-est} implies the following estimates  (cf.  \cite{SHOE})
\begin{align}\label{ex:max-vel-est_3}
&\Big \| (N_{|x|> \rho})^{\frac12} e^{-itH}\Gamma(\chi_{|x|<b}) \psi_0 \Big \| \ls  (\rho-b)^{-n} \big \| N^{\frac12}\psi_0 \big\|, \quad \rho>b+ct,\\
 \label{ex:max-vel-est_4}
&\Big \| (N_{|x|< b})^{\frac12} e^{-itH}\Gamma(\chi_{|x|>\rho}) \psi_0 \Big \| \ls  (\rho-b)^{-n} \big \| N^{\frac12}\psi_0 \big\|, \quad \rho>b+ct,
\end{align}
where $\Gamma(a)$ denotes the operator on $\mathcal{F}$ defined by its restriction to the $n$-particle space $\mathcal{F}_n$ by $\Gamma(a)|_{\mathcal{F}_n}=\otimes^n a$, $\Gamma(a)|_{\mathcal{F}_0}=\one_{\mathcal{F}_0}$.
Equation \eqref{ex:max-vel-est_3} shows that,  if the initial state $\psi_0$ is localized in $\{|x|< b\}$, then 
 the probability that particles are transported from $\{|x|\le b\}$ to $\{|x|>\rho\}$ in time $t\le \frac1c (\rho-b)$ is $\le C(\rho-b)^{-n}$.

Theorem \ref{thm:max-vel-est} is  proven in Section \ref{sec:thm1}. 
The idea of the proof of Theorem \ref{thm:max-vel-est}  
  is as follows. Let $\Phi(t)$ a positive, differentiable 
   family of observables and denote $\lan A\ran_t :=\lan\psi_t, A\psi_t\ran$. Note the relation  
\begin{align}\label{dt-Heis}
&{d\over{dt}}\left<\Phi(t)\right>_t =\lan D\Phi(t)\ran_t,\ \text{ where }\ 
 D\Phi(t)=i[H,\Phi(t)]+{\partial\over{\partial t}}\Phi(t).
\end{align}
We call $D$ the {\it Heisenberg derivative}.  
Using $\lan \Phi(t)\ran_t= \lan \Phi(0)\ran_0+\int_0^t \p_r\left<\Phi(r)\right>_r dr$, we find 
\begin{align} \label{eq-basic}  
\lan \Phi(t)\ran_t-\int_0^t \lan D\Phi(r)\ran_r dr= \lan \Phi(0)\ran_0,
\end{align}
which we call the {\it basic equality}.
If (a) $\Phi(t)\ge 0$ and $ \lan D\Phi(t)\ran_t \le 0$, for a certain class of initial conditions, modulo fast time-decaying terms,
then relation \eqref{eq-basic} gives estimates on the positive terms $ \lan \Phi(t)\ran_t$ and $-\int_0^t \lan D\Phi(r)\ran_r dr$. If, in addition, (b) $ \Phi(t) \gs \mathrm{d}\Gamma(\chi_{|x|> \rho})$, modulo fast time-decaying terms,
then, we have an estimate on $\lan \mathrm{d}\Gamma(\chi_{|x|> \rho})\ran_t$ leading to Theorem \ref{thm:max-vel-est}. 
So our goal is to find a  family, $\Phi(t)$,  of observables, called propagation observables, satisfying conditions (a) and (b).

\section{Proof of Theorem \ref{thm:max-vel-est}}\label{sec:thm1}

\subsection{Differential inequalities}
We fix $c> \kappa$, $v:=\frac12(c+\kappa)$ and  let $\cE$ be the set of functions  $0\le f\in \mathrm{C}^\infty(\R)$, supported in $\R^+=(0,\infty)$ and
satisfying $f(\lam)=1$ for $\lam\ge c-v$,  and $f^\prime\ge 0$, with $\sqrt{f'}\in \mathrm{C}^\infty(\mathbb{R})$. 
 
To shorten formulas below, we will use the following notations: 
\begin{align*}
 |x|_{ts} :=s^{-1}(|x| -b -v t), 
  \end{align*}
and, for a bounded function $f:\mathbb{R}\to\mathbb{R}$,
\begin{equation*}
N_{f,ts} := \mathrm{d}\Gamma\big( f(|x|_{ts})\big).
\end{equation*}
Recall that, given an operator $A$ and an initial state $\psi_0$, we denote
\begin{equation*}
\langle A \rangle_t := \langle \psi_t , A \psi_t \rangle, \quad \psi_t :=e^{-iHt} \psi_0.
\end{equation*}

Next is a key statement in the proof of Theorem \ref{thm:max-vel-est}:

\begin{proposition}\label{prop:propag-est1} 
Suppose that \eqref{eq:kappa_J} is satisfied for some integer $p \ge 1$. For all $c>\kappa$, $f\in \cE$ and any integer $n\le p -1$, there are $j_k\in \cE$, $2\le k\le n$, and $C>0$ such that, for all $b>0$, $t>0$ and $s>0$, 
\begin{align}
&\int_0^t \big\lan N_{f',rs} \big\ran_r dr  \le C \Big(
  s\big \lan N_{f,0s} \big\ran_0 
+ \sum_{k=2}^n s^{-k+2} \big\lan N_{j_k,0s} \big\ran_0 +  t s^{-n}\langle N\rangle_0\Big), \label{propag-est31} 
\end{align}
where the sum should be dropped if $n=0,1$.
\end{proposition} 

 We will use the following easy lemma whose proof is postponed to Appendix \ref{app:lemma}. 
\begin{lemma}\label{lm:taylor}
Let $f\in\cE$. For all $n\in\mathbb{N}$, there exist $\tilde f_k \in \cE$, $2\le k\le n$ and positive constants $C_{f,k}$ such that, for all $x,y\in\R$,  and with $u:=(f')^{1/2}$ and $\tilde u_k:=(\tilde f'_k)^{1/2}$,
\begin{align*}
f(x)-f(y) = (x-y)u(x)u(y) + \sum_{k=2}^n (x-y)^kh_k(x,y) + \mathcal{O}\big((x-y)^{n+1}\big) ,
\end{align*}
where the sum should be dropped for $n=1$ and, for $2\le k\le n$,
\begin{align*}
|h_k(x,y)| \le C_{f,k}  \tilde u_k(x) \tilde u_k(y).
\end{align*}
\end{lemma}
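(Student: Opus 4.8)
The plan is to start from the exact representation $f(x)-f(y)=\int_y^x f'(s)\,ds=\int_y^x u(s)^2\,ds$, valid because $f'=u^2$ with $u=(f')^{1/2}$, and then to expand the \emph{two} copies of $u$ in the integrand \emph{separately}: one by Taylor's theorem around $x$ and the other around $y$. Writing $u(s)=\sum_{i=0}^{n-1}\frac{(s-x)^i}{i!}u^{(i)}(x)+\mathcal{R}_x(s)$ and $u(s)=\sum_{j=0}^{n-1}\frac{(s-y)^j}{j!}u^{(j)}(y)+\mathcal{R}_y(s)$ and multiplying, the product $u(s)^2$ becomes a double sum of monomials $(s-x)^i(s-y)^j$ with coefficients $u^{(i)}(x)u^{(j)}(y)$, plus cross terms containing a Taylor remainder $\mathcal{R}_x$ or $\mathcal{R}_y$. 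The whole point of using one expansion per factor is that it yields coefficients that \emph{factorize} into a function of $x$ times a function of $y$, which is exactly the product structure the lemma demands; a single Taylor expansion would instead produce coefficients depending on only one variable, which cannot obey the bound $|h_k(x,y)|\le C_{f,k}\tilde u_k(x)\tilde u_k(y)$.

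Integrating the monomials over $s\in[y,x]$ is elementary: the substitution $s=y+t(x-y)$ turns the integral into a Beta integral, $\int_y^x (s-x)^i(s-y)^j\,ds=(-1)^i\frac{i!\,j!}{(i+j+1)!}(x-y)^{i+j+1}$. Hence, dividing by $i!\,j!$ and grouping by total degree $k:=i+j+1$, the main part contributes $\sum_k (x-y)^k h_k(x,y)$ with $h_k(x,y)=\frac{1}{k!}\sum_{i+j=k-1}(-1)^i u^{(i)}(x)u^{(j)}(y)$. The $k=1$ term is exactly $(x-y)u(x)u(y)$, the claimed leading term, and for $2\le k\le n$ each $h_k$ is manifestly a finite sum of products of a function of $x$ and a function of $y$. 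To get the stated bound I would dominate the finitely many factors $|u^{(i)}|$, $0\le i\le k-1$, by a single $\tilde u_k=(\tilde f_k')^{1/2}$ with $\tilde f_k\in\cE$, which then gives $|h_k|\le\frac{1}{(k-1)!}\tilde u_k(x)\tilde u_k(y)$ up to the domination constant. This domination is possible because $u=(f')^{1/2}$ is smooth and, together with all its derivatives, is supported in the compact set $\{0\le\lambda\le c-v\}=\supp f'$ and vanishes to infinite order at its endpoints (as $f$ is constant outside $(0,c-v)$): one picks a smooth nonnegative function that is flat at these endpoints, dominates $\max_{i\le k-1}|u^{(i)}|$, and is normalized so that its square integrates to $1$, and takes $\tilde f_k$ to be its primitive.

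Finally I would collect into the remainder (i) the main terms with $i+j\ge n$, whose degree is $i+j+1\ge n+1$, and (ii) the three cross terms involving $\mathcal{R}_x$ or $\mathcal{R}_y$, and show the total is $\mathcal{O}((x-y)^{n+1})$ uniformly in $x,y$. For $|x-y|$ bounded this is the standard Taylor estimate, both points then lying in a fixed compact set on which $f$ and $u$ are smooth. For $|x-y|$ large the estimate is automatic: $f(x)-f(y)$ is bounded by $1$, while every explicit term is uniformly bounded because the factors $u^{(i)}$ force both points into $\{0\le\lambda\le c-v\}$, whence $|x-y|\le c-v$ on their support; so the bounded remainder is trivially $\le C|x-y|^{n+1}$ once $|x-y|$ exceeds a constant.

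I expect the main obstacle to be precisely this smooth-domination step together with the uniform control of the remainder. For the former, one must produce a genuinely $C^\infty$ (not merely continuous) $\tilde u_k$ that is flat at the endpoints yet still dominates the given derivatives of $u$ — the delicate point being that the $u^{(i)}$ may vanish at the endpoints at different (infinite-order) rates, so $\tilde u_k$ must be chosen to vanish no faster than all of them while remaining smooth; this uses that flat functions are cofinal, so any fixed flat $\max_i|u^{(i)}|$ admits a flat smooth majorant. For the latter, one must upgrade the asymptotic Taylor remainder to a bound valid for \emph{all} $x,y$, which is exactly where the compact support of $f'$ (and the resulting global boundedness of every explicit term) is essential.
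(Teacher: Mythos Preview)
Your approach is correct and genuinely different from the paper's. The paper proceeds by a single-variable Taylor expansion of $f$ around $x$, which produces coefficients $f^{(k)}(x)$ depending on $x$ alone; to manufacture the required product structure in $y$, it then inserts bump functions $v_k$ with $f'\prec v_k$ via the identity $w_k(x)=w_k(x)v_k(x)=w_k(x)v_k(y)+w_k(x)(v_k(x)-v_k(y))$ and Taylor-expands $v_k(x)-v_k(y)$, pushing the defect to higher order. This has to be iterated over $k=2,\dots,n$, at each step generating new terms to be reprocessed. Your integral representation $f(x)-f(y)=\int_y^x u(s)^2\,ds$ together with the asymmetric double expansion (one factor around $x$, the other around $y$) produces the factored coefficients $h_k(x,y)=\frac{1}{k!}\sum_{i+j=k-1}(-1)^i u^{(i)}(x)u^{(j)}(y)$ in a single stroke, with no iteration; this is cleaner and more transparent.

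The trade-off is in the domination step. The paper's $h_k(x,y)=v_k(x)w_k(x)v_k(y)$ comes with an explicit bump $v_k\equiv 1$ on $\supp f'$, so the bound $|h_k|\lesssim v_k(x)v_k(y)$ is immediate and $\tilde f_k$ is simply the normalized primitive of $v_k^2$. Your version requires a smooth $\tilde u_k$, flat at the endpoints of $[0,c-v]$, that dominates all of the $|u^{(i)}|$, $i\le k-1$; as you note, this relies on the (true but not entirely trivial) fact that finitely many smooth flat functions admit a common smooth flat majorant. Your remainder analysis---Taylor estimate for $|x-y|$ bounded, and the observation that every explicit term is globally bounded because its support forces $x,y\in[0,c-v]$---is correct and in fact handles the uniformity more carefully than the paper, which leaves the global $\mathcal O((x-y)^{n+1})$ implicit.
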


\begin{proof}[Proof of Proposition \ref{prop:propag-est1}]
For $n=0$, the proposition is obvious, since $N_{f',rs}\le C_fN$. In the following we fix $n\in\mathbb{N}$.

 We use  the time-dependent observable 
\begin{align}\label{propag-obs1}
\Phi_s(t) & = N_{f,ts}, \quad  f \in \cE, 
\end{align}
with $t,s>0$.  In order to estimate $\left<\Phi_s(t)\right>_t=\lan\psi_t,\,\Phi_s(t)\psi_t\ran$,  
we apply \eqref{dt-Heis} and  the basic equality \eqref{eq-basic}. 
We start by computing $D\Phi_s(t)$. First, we have
\begin{align} \label{eq:deriv}
{\partial\over{\partial t}}\Phi_s(t)=-s^{-1}v \, N_{f',ts}.
\end{align}
Then using Lemma \ref{lem:HRf-com}, we have
\begin{align}
i\big [H , \Phi_s(t) \big ] &= \sum_{x,y\in\Lambda, x\neq y} J_{xy} \{f(|x|_{ts})-f(|y|_{ts}) \}b_x^*b_y , \label{eq:commut}
\end{align}
in the sense of quadratic forms on $\mathcal{D}(H)\cap\mathcal{D}(N)$. Applying Lemma \ref{lm:taylor} gives
\begin{align*}
f(|x|_{ts})-f(|y|_{ts}) &= s^{-1}(|x|-|y|)u(|x|_{ts})u(|y|_{ts}) \\
+ & \sum_{k=2}^n \frac{(|x|-|y|)^k}{s^k} h_k(|x|_{ts},|y|_{ts}) + \mathcal{O}\big(s^{-n-1}(|x|-|y|)^{n+1}\big) ,
\end{align*}
where the sum should be dropped if $n=1$ and, for $2\le k\le n$,
\begin{align*}
|h_k(x,y)| \le C_{f,k}  \tilde u_k(x) \tilde u_k(y).
\end{align*}
Here we have set $u:=(f')^{1/2}$ and $\tilde u_k:=(\tilde f'_k)^{1/2}$, with $\tilde f_k\in\cE$. Inserting this into \eqref{eq:commut} yields:
\begin{align}
i\big [H , \Phi_s(t) \big ] &= s^{-1} \sum_{x,y\in\Lambda, x\neq y} J_{xy} (|x|-|y|)u(|x|_{ts})u(|y|_{ts}) b_x^*b_y \notag \\
&\quad + \sum_{k=2}^n \sum_{x,y\in\Lambda, x\neq y} J_{xy} \frac{(|x|-|y|)^k}{s^k} h_k(|x|_{ts},|y|_{ts}) b_x^*b_y \notag \\
&\quad + s^{-n-1} \sum_{x,y\in\Lambda, x\neq y} J_{xy} \mathcal{O}\big((|x|-|y|)^{n+1}\big). \label{eq:commut_1}
\end{align}
Using the Cauchy-Schwarz inequality and the fact that $u^2=f'$, we deduce the following form inequalities for the first term: for all $\varphi\in\mathcal{D}(N^{\frac12})$,
\begin{align*}
&\Big |\Big \langle \varphi , \sum_{x,y\in\Lambda, x\neq y} J_{xy} (|x|-|y|)u(|x|_{ts})u(|y|_{ts}) b_x^*b_y \varphi \Big \rangle \Big |\\
&\le\sum_{x,y\in\Lambda, x\neq y} | J_{xy}||x-y| \big|\big \langle u(|x|_{ts}) b_x \varphi , u(|y|_{ts}) b_y \varphi \big \rangle\big| \\
&\le \Big( \sum_{x\in\Lambda} f'(|x|_{ts}) \langle\varphi,b^*_xb_x\varphi\rangle\Big(\sum_{y\in\Lambda,y\neq x} |J_{xy}||x-y|\Big)\Big)^{\frac12} \\
&\quad\times \Big(\sum_{y\in\Lambda} f'(|y|_{ts}) \langle\varphi,b^*_yb_y\varphi\rangle\Big(\sum_{x\in\Lambda,x\neq y} |J_{xy}||x-y|\Big)\Big)^{\frac12} \\
&\le\kappa\big\langle\varphi,N_{f',ts} \,\varphi\big\rangle.
\end{align*}
Higher-order terms can be treated in the same way, yielding, for all $2\le k\le n$ and $\varphi\in\mathcal{D}(N^{\frac12})$,
\begin{align*}
&\Big |\Big \langle \varphi , \sum_{x,y\in\Lambda, x\neq y} J_{xy} (|x|-|y|)^k h_k(|x|_{ts},|y|_{ts}) b_x^*b_y \varphi \Big \rangle \Big |\\
&\le\kappa^{(k)}_JC_{f,k}\big\langle\varphi,N_{\tilde f^\prime_k,ts} \,\varphi\big\rangle ,
\end{align*}
since $n\le p $. Likewise, the remainder term in \eqref{eq:commut_1} can be estimated as
\begin{align*}
&\Big |\Big \langle \varphi , \sum_{x,y\in\Lambda, x\neq y} J_{xy} \mathcal{O}\big((|x|-|y|)^{n+1}\big) b_x^*b_y \varphi \Big \rangle \Big |\le \kappa_J^{(n+1)}C_{f,n} \big\langle\varphi,N\varphi\big\rangle,
\end{align*}
since $n+1\le p $.

Putting together the previous inequalities gives
\begin{align*}
 i\big [H , \Phi_s(t) \big ] &\le  \kappa s^{-1} N_{f',ts} + \sum_{k=2}^n \kappa^{(k)}_J C_{f,k} s^{-k} N_{\tilde f_k^\prime ,ts}   + C_{f,n} \kappa_J^{(n+1)} s^{-n-1} N,
\end{align*}
in the sense of quadratic forms on $\mathcal{D}(H)\cap\mathcal{D}(N)$. Combining this estimate with \eqref{eq:deriv}, we arrive at 
\begin{align*}
 D\Phi_s(t)  &\le (\kappa-v) s^{-1} N_{f',ts} + \sum_{k=2}^n \kappa^{(k)}_J C_{f,k} s^{-k} N_{\tilde f_k^\prime ,ts}   + C_{f,n} \kappa_J^{(n+1)} s^{-n-1} N.
\end{align*}

Applying the previous inequalities to the vector 
$\psi_t=e^{-iHt}\psi_0$, with $\psi_0\in\mathcal{D}(H)\cap\mathcal{D}(N)$, using Eqs. \eqref{dt-Heis} and \eqref{eq-basic} and the definition $\Phi_s(t)= N_{f,ts}$ give  
\begin{align} \label{propag-est2} 
&\big\lan N_{f,ts}\big \ran_t+(v-\kappa)s^{-1}\int_0^t\big \lan N_{f',rs} \big\ran_r dr\notag\\
& \le \,\big \lan N_{f,0s}\big \ran_0+C_{f,n} \sum_{k=2}^n s^{-k}\int_0^t \big\lan N_{\tilde f'_k,rs}\big \ran_r dr + C_{f,n} t s^{-n-1}\langle N\rangle_0. 
\end{align}
Since $\kappa < v$, \eqref{propag-est2} implies (after dropping $\lan N_{f,ts} \ran_t$ and multiplying by $s(v-\kappa)^{-1}$) that
\begin{align}
\int_0^t \big \lan N_{f',rs} \big\ran_r dr\, 
 \le C_{f,c,n} \Big( \, &
  s \big\lan N_{f,0s} \big\ran_0 + \sum_{k=2}^n s^{-k+1} \int_0^t \big\lan N_{\tilde f'_k,rs} \big\ran_r dr \notag\\
  &+  t s^{-n}\langle N\rangle_0\Big). \label{propag-est3} 
\end{align}
If $n=1$, the sum should be dropped, which gives estimate \eqref{propag-est31}. If $n\ge2$, applying \eqref{propag-est3} to the term $\int_0^t \lan N_{\tilde f'_2,rs}  \ran_r dr$ and using that if $f_1,f_2\in\cE$ then $f_1+f_2\lesssim f_3$ for some $f_3\in\cE$, we obtain
\begin{align}
\int_0^t\big \lan  N_{f',rs}\big \ran_r dr\, 
 \le C_{f,c,n} \Big( \, &
  s \big\lan  N_{f,0s} \big\ran_0 + \big\lan N_{\tilde f_2 , 0s} \big\ran_0 + \sum_{k=3}^n s^{-k+1} \int_0^t \big\lan N_{\tilde{\tilde f}'_k,rs}\big \ran_r dr \notag \\
  & +  t s^{-n}\langle N\rangle_0\Big), \label{propag-est33} 
\end{align}
for some $\tilde{\tilde{f}}_k\in\cE$. Repeating the procedure, we arrive at \eqref{propag-est31} for $\psi_0\in\mathcal{D}(H)\cap\mathcal{D}(N)$. By a standard density argument, this extends to $\psi_0\in\mathcal{D}(N^{1/2})$ and hence Proposition \ref{prop:propag-est1} is proven.
\end{proof}

\subsection{Concluding the proof of Theorem \ref{thm:max-vel-est}}
\begin{proof}[End of the proof of \eqref{max-vel-est}]
Recall that $\supp f\subset \R^+$ for any $f\in \cE$, 
and hence, for any $s>0$,
\begin{equation*}
\supp f\Big(\frac{\cdot-b}{s}\Big) \subset (b+\delta s,\infty)$ for some $\delta\ge0.
\end{equation*} 
Therefore, for any $f\in \cE$,
\begin{align}\label{local-est4}
\big \lan N_{f,0s} \big\ran_0 \le\big \langle N_{|x|>b}\big\rangle_0 .\end{align}

Next, retaining the first term in \eqref{propag-est2}, dropping the second one and using \eqref{local-est4}, we obtain that
\begin{align*} 
&\big\lan N_{f,ts} \big \ran_t \le \,\big \langle N_{|x|>b}\big\rangle_0 +C_{f,n} \sum_{k=2}^n s^{-k}\int_0^t \big\lan N_{\tilde f'_k,rs}\big \ran_r dr + C_{f,n} t s^{-n-1}\langle N\rangle_0. 
\end{align*}
Applying \eqref{propag-est31} and again \eqref{local-est4} to estimate the integrated term, we deduce that
\begin{align} \label{propag-est4} 
\big\lan N_{f,ts} \big\ran_t \le(1+C_{f,c,n,\delta_0}s^{-1}) \big\langle N_{|x|>b} \big\rangle_0 +  C_{f,c,n,\delta_0} s^{-n}  \langle N\rangle_0,  \end{align}
for any $n\le p -1$ and $\max(t,c^{-1}\delta_0)\le s$.

Now, for any $f\in \cE $, we have $f(\lam)=1$ for $\lam\ge c-v$, and therefore $f(\frac{\cdot-b-vt}{s})=1$ on 
$[ b+vt +(c-v) s , \infty)$. 
For $\rho \ge b + cs$ and $s\ge t$, we have $[\rho,\infty) \subset [ b+vt +(c-v) s , \infty)$.
Hence, choosing $s=(\rho-b)/c\ge\max(t,c^{-1}\delta_0)$,  we conclude that, for $\rho \ge b + ct$,
\begin{align*}
\big\lan N_{|x|>\rho} \big\ran_t &\le \big\lan N_{f,ts}\big \ran_t \\
&\le\big(1+C_{f,c,n,\delta_0}(\rho-b)^{-1}\big) \big\langle N_{|x|>b} \big\rangle_0 +  C_{f,c,n,\delta_0} (\rho-b)^{-n}  \langle N\rangle_0.
\end{align*}
 Since $ \psi_t :=e^{-iHt}\psi_0$,  this implies \eqref{max-vel-est}. 
\end{proof}

\begin{proof}[Proof of \eqref{max-vel-est_2}]
It suffices to proceed in the same way, with the following modifications. Fix $c> \kappa$, $v=\frac12(c+\kappa)$ and  let $\cG$ be the set of functions  $0\le f\in \mathrm{C}^\infty(\R)$, supported in $(-\infty,0)$ and
satisfying $f(\lam)=1$ for $\lam\le v-c$,  and $f^\prime\le 0$, with $\sqrt{|f'|}\in \mathrm{C}^\infty(\mathbb{R})$. In other words, $f\in\cG$ if and only if $f(-\cdot)\in\cE$.

We can then adapt the proof of Proposition \ref{prop:propag-est1} and \eqref{max-vel-est}, considering the time dependent observable 
\begin{align}\label{propag-obs1_2}
\Phi_s(t) & = \mathrm{d}\Gamma\big ( f(|x|^-_{ts}) \big ), \quad f \in \cG, \quad |x|^-_{ts} :=s^{-1}(|x| -\rho +v t),
\end{align}
instead of \eqref{propag-obs1}.
\end{proof}

\subsection{Propagation bounds in annuli}

In order to prove Theorems \ref{thm:MVE-SM} and \ref{thm:MVE-SM2}, we need the following extension of Theorem \ref{thm:max-vel-est}.

\begin{theorem}\label{thm:max-vel-est-annuli}
Suppose that \eqref{eq:kappa_J} is satisfied for some integer $p \ge 1$. For all $c > \kappa$, $\delta_0>0$ and all integers $n\le p-1$, there exists $C>0$ such that, for all $b>0$, $0<\beta<1$, $\rho >\beta^{-1}(b + \delta_0)$ and $\psi_0\in\mathcal{D}(N^{\frac12})\subset\mathcal{F}$,
\begin{align*}
&\sup_{0\le t<c^{-1}(\beta\rho-b)} \big \langle \psi_t, N_{\beta\rho< |x|<(2-\beta)\rho} \psi_t \big \rangle\notag\\
& \quad\le \big(1+C(\beta\rho-b)^{-1}\big) \big\langle\psi_0, N_{b<|x|<2\rho-b}\psi_0\big\rangle +  C (\beta\rho-b)^{-n}  \langle\psi_0,N\psi_0\rangle.
\end{align*}
\end{theorem}

The prove Theorem \ref{thm:max-vel-est-annuli} we will use the time-dependent observable 
\begin{align*}
\Phi_s(t) & = \mathrm{d}\Gamma\big ( f(|x|^+_{ts})g(|x|^-_{ts}) \big ), 
\end{align*}
where $f\in\cE$, $g\in\cG$ and
\begin{align*}
 |x|^+_{ts} :=s^{-1}(|x| -b -v t),\ \quad |x|^-_{ts} :=s^{-1}(|x| - (2\rho-b) +v t).
\end{align*}
The proof of Theorem \ref{thm:max-vel-est-annuli} is then similar to that of Theorem \ref{thm:max-vel-est}. It is deferred to Appendix \ref{app:annuli}.

\section{Proof of Theorem \ref{thm:MVE-SM}}
\label{sec:thm2}

The overarching idea of the proof is to convert, with help of a new localization technique, the particle number bounds obtained in  Theorem \ref{thm:max-vel-est} into bounds on the commutators of observables stated in  Theorem \ref{thm:MVE-SM}. More precisely, we factorize the Fock space as $\mathcal{F}=\mathcal{F}_<\otimes\mathcal{F}_>$, where $\mathcal{F}_<$ is the Fock space over $\ell^2(\{|x|<\rho\})$ and $\mathcal{F}_>$ is the Fock space over $\ell^2(\{|x|\ge\rho\})$. The localized observables $A$ and $B$ factorize in this representation as $A=A\otimes\one$, $B=\one\otimes B$. Next we compare the dynamics generated by the Hamiltonian $H$ to the dynamics generated by an uncoupled Hamiltonian of the form $\tilde H=H_<\otimes\one+\one\otimes H_>$ (precise definitions will be given below). This produces error terms that we can control thanks to the dynamical bounds on particle transport established in Theorem \ref{thm:max-vel-est-annuli}. We then deduce that
\begin{equation*}
A_t B = e^{-itH}Ae^{itH}B\approx \big(e^{-itH_<}Ae^{itH_<}\big)\otimes B,
\end{equation*}
up to small remainder terms. Since the same holds if $A_t B$ in the left-hand-side is replaced by $B A_t$, this finally implies the Lieb-Robinson bound stated in Theorem \ref{thm:MVE-SM}.

Recalling the parameters $c$, $b$, $\delta_0$ and $\rho$ involved in the statement of Theorem \ref{thm:MVE-SM}, we introduce two further parameters that will be fixed as follows throughout our proof. We let $v$ be such that $c>v>\kappa$ and introduce a parameter $\alpha\in(0,1)$, $\alpha$ close to $1$ such that 
\begin{equation}\label{eq:cond_alpha}
c\big (1 + (2\alpha-2)(1+\delta_0^{-1})\big)>v.
\end{equation}
Let 
\begin{equation*}
0\le t<(2c)^{-1}(\rho-b)<(2v)^{-1}\big((2\alpha-1)\rho-b\big).
\end{equation*}
Note that the second inequality above is a consequence of \eqref{eq:cond_alpha} together with the fact that $\rho>(1+\delta_0)b$.

We divide the proof of Theorem \ref{thm:MVE-SM} into a few subsections.

\subsection{Factorization of Fock space}

 Let $H \equiv H_\Lam$ and define the symmetric Fock spaces (cf. \eqref{Fock-sp} below)
\begin{align*}
&\cF_< := \cF( \ell^2( \mathcal{B}(0,\rho) ) ),\\ 
&\cF_{>} := \cF( \ell^2( \mathcal{B}(0,\rho)^c )),
\end{align*}
over $\ell^2( \mathcal{B}(0,\rho) )$ and  $\ell^2( \mathcal{B}(0,\rho)^c )$, respectively. For $f\in \ell^2(\Lambda)$, we write
\begin{equation*}
f_<:=\chi_{|x|<\rho}f, \quad f_>:=\chi_{|x|\ge\rho}f.
\end{equation*}
 Let $U_\rho : \cF\to \cF_< \otimes \cF_>$ be the unitary operator defined by
 \begin{equation*}
 U_\rho\Omega:=\Omega_{<}\otimes\Omega_>,
 \end{equation*}
where $\Omega_\sharp$ is the vacuum in $\cF_\sharp$ and, for all $f\in \ell^2(\Lambda)$,
 \begin{equation}\label{eq:Urhodefn} 
 U_\rho a^\sharp(f)=\big(a^\sharp(f_<)\otimes\one+\one\otimes a^\sharp(f_>) \big)U_\rho,
 \end{equation}
 where $a^\sharp$ stands for $a$ or $a^*$. Note that since $A$ is localized in $\mathcal{B}(0,b)\subset\mathcal{B}(0,\rho)$ and $B$ is localized in $\mathcal{B}(0,2\rho-b)^c\subset\mathcal{B}(0,\rho)^c$, we have
 \begin{align*}
 U_\rho AU_\rho^*=A\otimes\one, \quad U_\rho BU_\rho^*=\one\otimes B.
 \end{align*}
To complete our construction, we define the Hamiltonians 
\begin{align*}
H_<:=H_{\mathcal{B}(0,\rho)}, \quad H_>:=H_{\mathcal{B}(0,\rho)^c},
\end{align*}
where $H_{\Lam'}$, for $\Lam'\subset\Lam$, are defined by \eqref{HLam}, with $\Lam$ replaced by $\Lam'$, and
\begin{align} \label{Htilde}\tilde H:=H_<\otimes\one+\one\otimes  H_> .
\end{align}

\subsection{Approximating the full dynamics by the decoupled dynamics}

In this section we estimate $U_\rho e^{itH}U_\rho^*$ by approximating it by $e^{it\tilde H}$:  We claim that, for any $\psi_1,\psi_2\in U_\rho\mathcal{D}(N^{\frac12}) \subset \cF_<\otimes\cF_>$,
\begin{align}
&\big|\langle\psi_1, U_\rho e^{itH}U_\rho^*\psi_2\rangle - \big\langle\psi_1,e^{it\tilde H}\psi_2\big\rangle\big |\notag\\
&\le C t \sup_{0\le r\le t}  \big( (1-\alpha)\rho)^{-\frac{p}{2}}\| N^{\frac12} U_\rho^*\psi_1\|+ \big \| (N_{\alpha,\rho}^{(1)})^{\frac12} U_\rho^*e^{-i(t-r)\tilde H} \psi_1 \big \| \big) \notag \\
&\quad\quad\times\big((1-\alpha)\rho)^{-\frac{p}{2}}\| N^{\frac12}e^{irH}U_\rho^*\psi_2\|+\big\| (N_{\alpha,\rho}^{(1)})^{\frac12} e^{irH}U_\rho^*\psi_2\big\| \big) , \label{eq:commut3}
\end{align}
where, to shorten notations, we have set
\begin{equation*}
N_{\alpha,\rho}^{(1)} := N_{\mathcal{C}_{\alpha\rho,(2-\alpha)\rho}}
\end{equation*}
We recall that $\mathcal{C}_{b_1,b_2} = \{ x \in \Lambda , b_1\le|x|\le b_2 \}$.

In order to prove this claim, we use the fundamental theorem of calculus to compute
\begin{align}
&\langle\psi_1,U_\rho e^{itH}U_\rho^*\psi_2\rangle - \big\langle\psi_1,e^{it\tilde H}\psi_2\big\rangle\notag\\
&=-i\int_0^t \big\langle \psi_1 , e^{i(t-r)\tilde H} \{ \tilde H U_\rho - U_\rho H \} e^{irH}U_\rho^*\psi_2\big\rangle dr. \label{eq:commut1}
\end{align}
A direct computation gives 
\begin{align}
& \tilde HU_\rho-U_\rho H  = U_\rho \Big ( \sum_{\substack{|x|\ge\rho,|y|<\rho}} J_{xy} b_x^*b_y +  \mathrm{h.c.} \Big ). \label{eq:commut2}
\end{align}
Using the Cauchy-Schwarz inequality, we have, for all $\varphi_1,\varphi_2\in\mathcal{D}(N^{1/2})\subset\mathcal{F}$,
\begin{align}
&\Big|\Big \langle \varphi_1 , \sum_{\substack{|x|\ge\rho,|y|<\rho}} J_{xy} b_x^*b_y \varphi_2 \Big \rangle \Big | \notag \\
&\le \Big(\sum_{\substack{|x|\ge\rho,|y|<\rho}} |J_{xy}| \langle \varphi_1,n_x\varphi_1\rangle\Big)^{\frac12}\Big(\sum_{|x|\ge\rho,|y|<\rho}|J_{xy}|\langle\varphi_2,n_y\varphi_2\rangle\Big)^{\frac12} . \label{eq:commut21}
\end{align}
For the second term in the right-hand-side of \eqref{eq:commut21}, we write
\begin{align}
&\sum_{\substack{|x|\ge\rho,|y|<\rho}}|J_{xy}|\langle\varphi_2,n_y\varphi_2\rangle \notag \\ 
&=\sum_{\substack{|y|<\alpha\rho}} \langle\varphi_2,n_y\varphi_2\rangle \sum_{\substack{|x|\ge\rho}} |J_{xy}| + \sum_{\substack{\alpha\rho\le|y|<\rho}} \langle\varphi_2,n_y\varphi_2\rangle \sum_{\substack{|x|\ge\rho}} |J_{xy}| \notag \\ 
&\le ((1-\alpha)\rho)^{-p} \sum_{\substack{|y|<\alpha\rho}} \langle\varphi_2,n_y\varphi_2\rangle \sum_{\substack{|x|\ge\rho}} |J_{xy}||x-y|^{p } \notag \\
&\quad+ \sum_{\substack{\alpha\rho\le|y|<\rho}} \langle\varphi_2,n_y\varphi_2\rangle \sum_{\substack{|x|\ge\rho}} |J_{xy}| \notag \\ 
&\le \kappa_J^{(p )}((1-\alpha)\rho)^{-p } \langle\varphi_2,N\varphi_2\rangle + \kappa_J^{(0)} \langle\varphi_2,N_{\mathcal{C}_{\alpha\rho,\rho}}\varphi_2\rangle . \label{eq:commut23_0}
\end{align}
Similarly, the first term in the right-hand-side of \eqref{eq:commut21} can be estimated as follows:
\begin{align}
&\sum_{\substack{|x|\ge\rho,|y|<\rho}}|J_{xy}|\langle\varphi_1,n_x\varphi_1\rangle \notag \\ 
&=\sum_{\substack{|x|>(2-\alpha)\rho}} \langle\varphi_1,n_x\varphi_1\rangle \sum_{\substack{|y|<\rho}} |J_{xy}| + \sum_{\substack{\rho\le|x|\le(2-\alpha)\rho}} \langle\varphi_1,n_x\varphi_1\rangle \sum_{\substack{|y|<\rho}} |J_{xy}| \notag \\ 
&\le ((1-\alpha)\rho)^{-p} \sum_{\substack{|x|>(2-\alpha)\rho}} \langle\varphi_1,n_x\varphi_1\rangle \sum_{\substack{|y|<\rho}} |J_{xy}||x-y|^{p } \notag \\
&\quad+ \sum_{\substack{\rho\le|x|\le(2-\alpha)\rho}} \langle\varphi_1,n_x\varphi_1\rangle \sum_{\substack{|y|<\rho}} |J_{xy}| \notag \\ 
&\le \kappa_J^{(p )}((1-\alpha)\rho)^{-p } \langle\varphi_1,N\varphi_1\rangle + \kappa_J^{(0)} \langle\varphi_1,N_{\mathcal{C}_{\rho,(2-\alpha)\rho}}\varphi_1\rangle . \label{eq:commut23}
\end{align}
Inserting \eqref{eq:commut21}--\eqref{eq:commut23} into \eqref{eq:commut2}, we obtain, for any $\varphi_1,\varphi_2\in\mathcal{D}(N^{1/2})\subset\mathcal{F}$,
\begin{align}
&\big|\big\langle U_\rho\varphi_1, (\tilde HU_\rho-U_\rho H)\varphi_2\big\rangle\big|\notag\\
&\le C\big((1-\alpha)\rho)^{-\frac{p}{2}}\| N^{\frac12}\varphi_1\|+\| (N_{\alpha,\rho}^{(1)})^{\frac12}\varphi_1\|\big)\notag\\
&\quad\times\big((1-\alpha)\rho)^{-\frac{p}{2}}\| N^{\frac12}\varphi_2\|+\|(N_{\alpha,\rho}^{(1)})^{\frac12}\varphi_2\|\big). \label{eq:commut24}
\end{align}
Using \eqref{eq:commut1}, \eqref{eq:commut2}, \eqref{eq:commut24} and the fact that $N$ commutes with $H$, we obtain \eqref{eq:commut3}.

\subsection{Approximating $A_tB$ by $(e^{-itH_<}Ae^{itH_<})\otimes B$ and conclusion of the proof}
We proceed in two steps.

\medskip 
\noindent \textbf{Step 1.} We approximate the first evolution operator $e^{-itH}$ by $e^{-it\tilde H}$ in $A_tB$: Let $A$, $B$ and $\varphi\in\mathcal{D}(N^{\frac12})\subset\cF$ be as in the statement of the theorem. We claim that
\begin{align}
\big\langle\varphi,A_tB\varphi\big\rangle&=\big\langle U_\rho\varphi,e^{-it\tilde H}(A\otimes\one)U_\rho e^{itH}U_\rho^*(\one\otimes B)U_\rho\varphi\big\rangle\notag\\
&\quad+\langle\varphi,\mathrm{Rem}_1\varphi\rangle, \label{eq:new1}
\end{align}
with 
 \begin{align}
&\big|\big\langle\varphi, \mathrm{Rem}_1 \,\varphi\big\rangle\big| \le C_{c,p,\delta_0} \|A\| \|B\| t (\rho-b)^{-p+1} \big\langle \varphi , N \varphi \big\rangle . \label{eq:atAB_33_0}
\end{align}

In order to prove \eqref{eq:new1}--\eqref{eq:atAB_33_0}, we write
\begin{align}
&A_tB=e^{-itH}Ae^{itH}B = U_\rho^*U_\rho e^{-itH}U_\rho^*(A\otimes\one)U_\rho e^{itH}U_\rho^*(\one\otimes B)U_\rho.\label{eq:atAB}
\end{align}
Applying \eqref{eq:commut3} with $\psi_1=U_\rho\varphi$, $\psi_2=(A\otimes\one)U_\rho e^{itH}U_\rho^*(\one\otimes B)U_\rho\varphi$ gives \eqref{eq:new1} with
\begin{align}
&\big|\big\langle\varphi, \mathrm{Rem}_1 \,\varphi\big\rangle\big|  \notag\\
&\le C t \sup_{0\le r\le t} \big((1-\alpha)\rho)^{-\frac{p}{2}}\|N^{\frac12}\varphi\|+\big\| (N_{\alpha,\rho}^{(1)} )^{\frac12} U_\rho^*e^{i(t-r)\tilde H} U_\rho\varphi \big \|\big) \notag\\
&\times \big((1-\alpha)\rho)^{-\frac{p}{2}}\big\|N^{\frac12}e^{-irH}A e^{itH}B\varphi\big\|+\big\| (N_{\alpha,\rho}^{(1)} )^{\frac12} e^{-irH} A e^{itH}B\varphi\big \| \big)  . \label{eq:atAB_31}
\end{align}

Since $N$ commutes with $H$, $A$ and $B$, the third term in \eqref{eq:atAB_31} can be estimated as
\begin{align}
\big\|N^{\frac12}e^{-irH}A e^{itH}B\varphi\big\|&=\big\|N^{\frac12}A e^{itH}B\varphi\big\|\le\|A\| \|B\| \big\| N^{\frac12}\varphi\big\|. \label{eq:P1_0}
\end{align}

To estimate the last term in \eqref{eq:atAB_31}, we use Theorem \ref{thm:max-vel-est-annuli}. Since $0\le r\le t<\frac12v^{-1}(\rho(2\alpha-1)-b)$, we have $\alpha\rho>\frac{b}{2}+\frac{\rho}{2}+vr$. Hence we can apply Theorem \ref{thm:max-vel-est-annuli} (with $b'=\frac{b}{2}+\frac{\rho}{2}$, noticing then that $\alpha\rho-b'=(\alpha-\frac12)\rho-\frac{b}{2}\ge C_{\delta_0}(\rho-b)$ by \eqref{eq:cond_alpha}), which yields
\begin{align} 
&\big\| (N_{\alpha,\rho}^{(1)})^{\frac12} e^{-irH} A e^{itH}B\varphi\big \|\notag\\
& \le C_{c,p,\delta_0} \Big( \big\| (N_{\alpha,\rho}^{(2)} )^{\frac12} A e^{itH}B\varphi\big\| +(\rho-b)^{-\frac{p-1}{2}}\big\| N^{\frac12} A e^{itH}B\varphi\big\|\Big) \notag\\
& \le C_{c,p,\delta_0} \Big( \big\| (N_{\alpha,\rho}^{(2)} )^{\frac12} A e^{itH}B\varphi\big\| +(\rho-b)^{-\frac{p-1}{2}}\|A\|\|B\| \big\| N^{\frac12} \varphi\big\|\Big) , \label{eq:l2_1}
\end{align}
where we used  \eqref{eq:P1_0} in the last equation and set
\begin{equation*}
N_{\alpha,\rho}^{(2)}:=N_{\mathcal{C}_{\frac{b}{2}+\frac{\rho}{2},\frac{3\rho}{2}-\frac{b}{2}}} .
\end{equation*}

To control the first term in the right-hand-side of \eqref{eq:l2_1}, we first observe that $A$ commutes with $N_{\alpha,\rho}^{(2)}$ since $A$ is localized in $\mathcal{B}(0,b)$ and $b<b/2+\rho/2$. Hence
\begin{align}\label{eq:comm_A_N}
&\big\| (N_{\alpha,\rho}^{(2)} )^{\frac12} A e^{itH}B\varphi\big\| =\|A\| \big\| (N_{\alpha,\rho}^{(2)})^{\frac12} e^{itH} B\varphi\big\|.
\end{align}
 Next we apply again Theorem \ref{thm:max-vel-est-annuli}. We fix $0<\beta<1$ such that $\beta\rho=b/2+\rho/2$ and note that $\beta\rho>b+(1-\alpha)\rho+vt$ (since $0\le t<\frac12v^{-1}(\rho(2\alpha-1)-b)$). Therefore, Theorem \ref{thm:max-vel-est-annuli} (with $b'=(1-\alpha)\rho+b$) implies
\begin{align}
&\big\| (N_{\alpha,\rho}^{(2)} )^{\frac12} e^{itH}B\varphi\big\| \le C_{c,p,\delta_0} \Big(\big\| (N_{\alpha,\rho}^{(3)} )^{\frac12} B\varphi\big\| + (\rho-b)^{-\frac{p-1}{2}}\big\| N^{\frac12} B\varphi\big\|\Big) , \label{eq:l2_2}
\end{align}
where
\begin{equation*}
N_{\alpha,\rho}^{(3)}:=N_{\mathcal{C}_{b+(1-\alpha)\rho,(1+\alpha)\rho-b}}.
\end{equation*}
Since $B$ is localized in $\mathcal{B}(0,2\rho-b)^c\subset \mathcal{B}(0,(1+\alpha)\rho-b)^c$, it commutes with $N_{\alpha,\rho}^{(3)}$. Moreover, using the condition \eqref{eq:hyp:empty} and the fact that $\alpha<1$, we have $N_{\alpha,\rho}^{(3)}\varphi=0$. Hence, using in addition that $B$ commutes with $N$, \eqref{eq:l2_2} reduces to
\begin{align}
&\big\| (N_{\alpha,\rho}^{(2)} )^{\frac12} e^{itH}B\varphi\big\| \le C_{c,p,\delta_0} (\rho-b)^{-\frac{p-1}{2}} \|B\| \big\| N^{\frac12} \varphi\big\| . \label{eq:l2_2_0}
\end{align}

Putting together Eq. \eqref{eq:l2_1}--\eqref{eq:l2_2_0}, we have proven that
\begin{align}
&\big\| (N_{\alpha,\rho}^{(1)} )^{\frac12} e^{-irH} A e^{itH}B\varphi\big \| \le C_{c,p,\delta_0} (\rho-b)^{-\frac{p-1}{2}}\|A\|\|B\| \big\| N^{\frac12} \varphi\big\| . \label{eq:estim_2_0}
\end{align}

The term $\| (N_{\alpha,\rho}^{(1)} )^{\frac12} U_\rho^*e^{i(t-r)\tilde H} U_\rho\varphi\|$ in \eqref{eq:atAB_31} can be treated in the same way (the argument is actually simpler since this term contains only one propagator). This gives
\begin{align}
&\big\| (N_{\alpha,\rho}^{(1)} )^{\frac12} U_\rho^*e^{i(t-r)\tilde H} U_\rho\varphi \big \| \le C_{c,p,\delta_0}\big((1-\alpha)\rho+b\big)^{-\frac{p-1}{2}}\big\|N^{\frac12}\varphi\big\|. \label{eq:estim_3_0}
\end{align}
 
 Inserting \eqref{eq:P1_0}, \eqref{eq:estim_2_0} and \eqref{eq:estim_3_0} into \eqref{eq:atAB_31}, using in addition that $(1-\alpha)\rho+b\ge(1-\alpha)(\rho-b)$ and that $\alpha$ can be fixed such that $(1-\alpha)^{-1}\le C_{c,\delta_0}$ (see \eqref{eq:cond_alpha}), we obtain \eqref{eq:atAB_33_0}.

\medskip

\noindent \textbf{Step 2.} Now we approximate the second evolution operator, $e^{itH}$, by $e^{it\tilde H}$ in $A_tB$. 

Going back to \eqref{eq:new1}, we apply again \eqref{eq:commut3}, now with $\psi_1=(A\otimes\one)e^{it\tilde H}U_\rho\varphi$, $\psi_2=(\one\otimes B)U_\rho\varphi$. This gives
\begin{align}
\big\langle\varphi,e^{-itH}Ae^{itH}B\varphi\big\rangle&=\big\langle U_\rho\varphi,e^{-it\tilde H}(A\otimes\one) e^{it\tilde H}(\one\otimes B)U_\rho\varphi\big\rangle\notag\\
&\quad+\langle\varphi,\mathrm{Rem}_1\varphi\rangle+\langle\varphi,\mathrm{Rem}_2\varphi\rangle, \label{eq:new1_1}
\end{align}
with 
\begin{align}
&\big|\big\langle\varphi, \mathrm{Rem}_2 \,\varphi\big\rangle\big|  \le C t \sup_{0\le r\le t} \big((1-\alpha)\rho)^{-\frac{p}{2}}\|N^{\frac12}AU_\rho^*e^{it\tilde H}U_\rho\varphi\| \notag \\
&\qquad+\big\| (N_{\alpha,\rho}^{(1)} )^{\frac12} U_\rho^*e^{i(t-r)\tilde H} (A\otimes\one)e^{it\tilde H}U_\rho\varphi \big \|\big) \notag\\
&\qquad\times \big((1-\alpha)\rho)^{-\frac{p}{2}}\big\|N^{\frac12}e^{irH}B\varphi\big\|+\big\| (N_{\alpha,\rho}^{(1)} )^{\frac12} e^{irH}B\varphi\big \| \big)  . \label{eq:atAB_31_1}
\end{align}
Proceeding in the same way as for $\mathrm{Rem}_1$ gives
\begin{align}
&\big|\big\langle\varphi, \mathrm{Rem}_2 \,\varphi\big\rangle\big| \le C_{c,p,\delta_0} \|A\| \|B\| t (\rho-b)^{-p+1} \big\langle \varphi , N \varphi \big\rangle. \label{eq:atAB_34_0}
\end{align}

Inserting \eqref{eq:atAB_33_0} and \eqref{eq:atAB_34_0}  into \eqref{eq:new1_1} and using that, due to \eqref{Htilde}, $e^{-it\tilde H}(A\otimes\one)e^{it\tilde H}=( e^{-itH_<}Ae^{itH_<} ) \otimes \one$, we can conclude that 
 \begin{align}
e^{-itH}Ae^{itH}B &=  U_\rho^*e^{-it\tilde H}(A\otimes\one)e^{it\tilde H}(\one\otimes B)U_\rho  + \mathrm{Rem} \notag\\ &=  U_\rho^*( e^{-itH_<}Ae^{itH_<}  \otimes B)U_\rho  + \mathrm{Rem} , \label{eq:atAB_5}
\end{align}
with
\begin{align}
&\big|\big\langle\varphi, \mathrm{Rem}\,\varphi\big\rangle\big|\le C_{c,p,\delta_0} t (\rho-b)^{1-p} \|A\| \|B\| \big\langle \varphi , N \varphi \big\rangle. \label{eq:rem_est_0}
\end{align}

\medskip

\noindent \textbf{Conclusion of the proof.}
Proceeding in the same way, one shows that
\begin{align}
& Be^{-itH}Ae^{itH} = U_\rho^*(e^{-itH_<}Ae^{itH_<}\otimes B)U_\rho  + \mathrm{Rem}  \label{eq:atAB_6}
\end{align}
with $\mathrm{Rem}$ satisfying \eqref{eq:rem_est_0}. Equations \eqref{eq:atAB_5}, \eqref{eq:rem_est_0} and \eqref{eq:atAB_6} prove the theorem.
\qed

\begin{proof}[Proof of Theorem \ref{thm:localapprox}]
Taking $B=\one$, Theorem \ref{thm:localapprox} directly follows from \eqref{eq:atAB_5} and \eqref{eq:rem_est_0}, with
\begin{equation}
[A_t]_\rho :=U_\rho^*e^{-it\tilde H}(A\otimes\one)e^{it\tilde H}U_\rho \equiv U_\rho^*( e^{-itH_<}Ae^{itH_<}  \otimes \one)U_\rho .\label{eq:def_alpha_trho}
\end{equation}
\end{proof}

\appendix

\section{Self-adjointness and basic commutators}\label{sec:SA}
In this appendix, we prove the self-adjointness of the Bose-Hubbard Hamiltonian, which is non-trivial in the case of an infinite lattice $\Lambda$.

We define formally the bosonic Fock space over $\ell^2(\mathcal{L})$ as
\begin{equation}\label{Fock-sp}
\mathcal{F}:=\bigoplus_{n=0}^\infty \mathcal{F}_n \quad \text{with} \quad \mathcal{F}_n := \otimes_s^n \ell^2(\mathcal{L}) ,
\end{equation}
where $\otimes_s$ denotes the symmetric tensor product. 
 Note that
\begin{equation*}
\mathcal{F}_n = \mathrm{Ran}(\chi_{N=n}) \text{ is the $n$-particle space.}
\end{equation*}
We write $\varphi = (\varphi_n)$ for each $\varphi \in \mathcal{F}$, with $\varphi_n\in\mathcal{F}_n$. For convenience, we will use the notations
\begin{equation*}
 N_2 := \sum_{x\in\Lam} n_x^2 , \quad T_J := \sum_{x\in\Lambda,y\in\Lambda} J_{xy} b_x^*b_y,
\end{equation*}
so that
\begin{align*}
H  = \frac{g}{2} N_2 - \Big ( \frac{g}{2} + \mu \Big ) N - T_J .
\end{align*} 

We first establish self-adjointness.

\begin{proposition}\label{prop:sa}
The operator $H$ is self-adjoint on $\mathcal{F}$ with domain 
\begin{equation*}
\mathcal{D}(H)=\Big \{\varphi = (\varphi_{n}) \in \mathcal{F} \, \text{ such that } \,  \sum_{n=0}^\infty \big\|H \varphi_n \big\|^2 < \infty \Big \}.
\end{equation*}
\end{proposition}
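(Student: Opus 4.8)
The plan is to exploit the fact, recorded in Corollary~\ref{cor:HN-com}, that $N$ commutes with $H$. Since $\cF = \bigoplus_{n=0}^\infty \cF_n$ with $\cF_n = \mathrm{Ran}(\chi_{N=n})$, the relation $[H,N]=0$ means that $H$ leaves each $n$-particle sector $\cF_n$ invariant and thus decomposes as an orthogonal direct sum $H = \bigoplus_{n=0}^\infty H_n$, where $H_n := H|_{\cF_n}$. I would then show that each $H_n$ is a \emph{bounded} self-adjoint operator on $\cF_n$ and invoke the standard theorem on direct sums of self-adjoint operators: if each $H_n$ is self-adjoint on $\cF_n$, then $\bigoplus_n H_n$ is self-adjoint on the domain $\{\varphi = (\varphi_n) : \varphi_n\in\mathcal{D}(H_n),\ \sum_n \|H_n\varphi_n\|^2 < \infty\}$. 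Because the $H_n$ turn out to be bounded, $\mathcal{D}(H_n)=\cF_n$ and $H_n\varphi_n = H\varphi_n$, so this domain is exactly the one in the statement. This is precisely the sense in which $H$ is self-adjoint as an infinite direct sum even when $\Lambda$ is infinite.

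The main work is the per-sector bound. Using $H = \tfrac{g}{2}N_2 - (\tfrac{g}{2}+\mu)N - T_J$, I would bound each summand on $\cF_n$. On $\cF_n$ one has $N=n$, and since the occupation numbers are nonnegative integers summing to $n$, $N_2 = \sum_x n_x^2 \le (\sum_x n_x)^2 = N^2 = n^2$; hence the first two terms have norm at most $\tfrac{g}{2}n^2 + |\tfrac{g}{2}+\mu|\,n$ on $\cF_n$. The crucial term is $T_J = \mathrm{d}\Gamma(J)$, the second quantization of the one-particle operator $J$ on $\ell^2(\Lambda)$ with kernel $J_{xy}$. On $\cF_n$ it acts as $\sum_{i=1}^n J^{(i)}$, where $J^{(i)}$ is $J$ acting on the $i$-th factor, so the operator norm of $T_J$ on $\cF_n$ is at most $n\|J\|_{\ell^2\to\ell^2}$, and it remains to show that $J$ is bounded on $\ell^2(\Lambda)$.

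This last point is where hypothesis \eqref{eq:kappa_J} enters, and it is the only genuinely delicate step. I would apply Schur's test: $J$ is bounded as soon as its row and column sums $\sum_y|J_{xy}|$ and $\sum_x|J_{xy}|$ are uniformly finite. Since $\mathcal{L}$ is a lattice, distinct sites satisfy $|x-y|\ge\delta_0$ for some $\delta_0>0$, so for $x\neq y$ one has $|J_{xy}| \le \delta_0^{-p}|J_{xy}||x-y|^p$, giving $\sum_{y\neq x}|J_{xy}| \le \delta_0^{-p}\kappa_J^{(p)} < \infty$; the diagonal contribution $\sum_x J_{xx}n_x$ is handled separately, being $\mathrm{d}\Gamma$ of a bounded diagonal operator. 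Hermiticity of $J$ (i.e.\ $J_{yx}=\overline{J_{xy}}$) turns the row-sum bound into the column-sum bound, so Schur's test applies and $\|J\|_{\ell^2\to\ell^2}$ is finite.

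Finally, each $H_n$ is symmetric (as $N_2$, $N$ are self-adjoint and $T_J^*=T_J$ by Hermiticity of $J$) and bounded, hence self-adjoint on $\cF_n$. The direct-sum theorem then yields self-adjointness of $H$ on the stated domain. The anticipated obstacle is exactly the boundedness of the hopping term $T_J$ on each sector, namely the Schur-test estimate converting the weighted quantity $\kappa_J^{(p)}$ into a bound on $\|J\|_{\ell^2\to\ell^2}$; the remaining arguments are routine bookkeeping.
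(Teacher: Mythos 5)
Your proof is correct and follows essentially the same route as the paper: use $[H,N]=0$ to decompose $H$ into its restrictions $H_n$ on the $n$-particle sectors, show each $H_n$ is bounded and symmetric (hence self-adjoint), and conclude via the direct-sum theorem. The only cosmetic difference is that you bound the hopping term by Schur's test on the one-particle operator $J$ while the paper applies Cauchy--Schwarz directly to the quadratic form of $T_{J,n}$, obtaining the same bound $n\kappa_J^{(0)}$; your explicit remark that $\kappa_J^{(0)}<\infty$ follows from \eqref{eq:kappa_J} and the minimal lattice spacing is a point the paper leaves implicit.
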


\begin{proof}
Observe that $H$ preserves the number of particles. For all $n\in\mathbb{N}_0$, let $H_{n}$ be the restriction of $H$ to $\mathcal{F}_n$. We have
\begin{align*}
H_{n} =\frac{g}{2} N_{2,n} - \Big ( \frac{g}{2} + \mu \Big ) n + T_{J,n},
\end{align*} 
where $N_{2,n}$, $T_{J,n}$ are the restrictions of $N_2$, $T_J$ to $\mathcal{F}_n$. Moreover, by the Cauchy-Schwarz inequality, for all $\varphi\in\mathcal{F}_n$,
\begin{align*}
\big | \langle \varphi , T_{J,n} \varphi \rangle | &\le \sum_{x,y\in\Lambda} \big |J_{xy}|  |\langle \varphi , b^*_x b_y \varphi \rangle \big | \\
& \le \Big( \sum_{x,y\in\Lambda} |J_{xy}| \big \| b_x \varphi \big \|^2 \Big)^{\frac12} \Big( \sum_{x,y\in\Lambda}|J_{xy}|\big \| b_y \varphi \big \|^2 \Big)^{\frac12} \\
& \le \kappa_J^{(0)} \langle \varphi , N \varphi \rangle = n\kappa_J^{(0)} \|\varphi \|^2.
\end{align*}
Hence $T_{J,n}$ is bounded. Likewise,
\begin{align*}
\big | \langle \varphi , N_{2,n} \varphi \rangle | = \sum_{x\in\Lambda}\langle \varphi , n_x^2 \varphi \rangle \le \sum_{x,y\in\Lambda}\langle \varphi , n_xn_y \varphi \rangle = \langle \varphi , N^2 \varphi \rangle =n^2\|\varphi\|^2.
\end{align*}
This shows that $H_{n}$ is a bounded operator on $\mathcal{F}_n$. Since in addition $H_n$ is clearly symmetric, it is then self-adjoint. The proposition follows.
\end{proof}

Next we note the following 

\begin{lemma}\label{lem:HRf-com} Let $f:\R\to\R$ be a bounded mesurable function. In the sense of forms on $\mathcal{D}(H)\cap\mathcal{D}(N)$, we have  
\begin{align} \label{HRf-com}[H, \mathrm{d}\Gamma(f(x))]&=\sum_{x\in\Lambda,y\in\Lambda}J_{x,y}\{f(x)-f(y) \}b_x^*b_y.\end{align}\end{lemma}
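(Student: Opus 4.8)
The plan is to reduce the commutator to the hopping term and then carry out an elementary canonical-commutation-relation computation, working sector by sector so as to make the form identity rigorous. First I would write $\mathrm{d}\Gamma(f(x))=\sum_{z\in\Lambda}f(z)n_z$, the second quantization of multiplication by $f$, and observe that it is diagonal in the occupation-number representation. Using the decomposition $H=\frac{g}{2}N_2-(\frac{g}{2}+\mu)N-T_J$ from the start of this appendix, together with the fact that $N_2=\sum_x n_x^2$ and $N=\sum_x n_x$ are likewise diagonal (all the number operators $n_x$ mutually commute), both $N_2$ and $N$ commute with $\mathrm{d}\Gamma(f(x))$. Hence only the hopping term contributes, and $[H,\mathrm{d}\Gamma(f(x))]=-[T_J,\mathrm{d}\Gamma(f(x))]$.

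Next I would compute the single elementary commutator. From the CCR $[b_a,b_c^*]=\delta_{ac}$ one obtains $[n_z,b_x^*]=\delta_{zx}b_x^*$ and $[n_z,b_y]=-\delta_{zy}b_y$, whence $[b_x^*b_y,n_z]=(\delta_{zy}-\delta_{zx})b_x^*b_y$. Multiplying by $f(z)$ and summing over $z$ gives $[b_x^*b_y,\mathrm{d}\Gamma(f(x))]=(f(y)-f(x))b_x^*b_y$. Summing against $J_{xy}$ then yields
\begin{equation*}
[T_J,\mathrm{d}\Gamma(f(x))]=\sum_{x,y\in\Lambda}J_{xy}\{f(y)-f(x)\}b_x^*b_y ,
\end{equation*}
and negating produces the claimed formula.

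The only real subtlety, and the step I would treat most carefully, is justifying these rearrangements as a form identity on $\mathcal{D}(H)\cap\mathcal{D}(N)$ when $\Lambda$ is infinite and the operators unbounded. Here I would exploit that $H$, $N$, and $\mathrm{d}\Gamma(f(x))$ all preserve the particle number, so it suffices to establish the identity on each $n$-particle sector $\mathcal{F}_n=\mathrm{Ran}(\chi_{N=n})$. On $\mathcal{F}_n$ everything is bounded: $\mathrm{d}\Gamma(f(x))$ has norm at most $n\|f\|_\infty$, while $T_{J,n}$ is bounded by $n\kappa_J^{(0)}$ exactly as in the proof of Proposition \ref{prop:sa}. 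The finiteness of $\kappa_J^{(0)}$ (implied by \eqref{eq:kappa_J}) guarantees, via the Cauchy-Schwarz estimate already used there, absolute convergence in the form sense of the double sum $\sum_{x,y}J_{xy}\{f(y)-f(x)\}b_x^*b_y$, using $|f(y)-f(x)|\le 2\|f\|_\infty$. Thus the sector-wise equalities assemble into the stated form identity on $\mathcal{D}(H)\cap\mathcal{D}(N)$, completing the argument.
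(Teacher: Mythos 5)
Your proof is correct and follows essentially the same route as the paper: observe that $\mathrm{d}\Gamma(f(x))$ commutes with the diagonal terms $N$ and $N_2$, so only the hopping term contributes, and then evaluate $[b_x^*b_y,n_z]$ via the canonical commutation relations. Your additional sector-by-sector justification of the form identity (using boundedness on each $\mathcal{F}_n$ and the Cauchy--Schwarz convergence of the double sum) is a careful elaboration of a point the paper leaves implicit, but it does not change the argument.
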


\begin{proof} 
Note that $\mathrm{d}\Gamma(f(x))$ commutes with $N$ and $N_2$. Hence,
\begin{align} \label{H0Rf-com}
[H,\mathrm{d}\Gamma(f(x))] = [T, \mathrm{d}\Gamma(f(x))]&=-\sum_{x\in\Lambda,y\in\Lambda}\sum_{z\in\Lambda}J_{xy} [b_x^*b_y, f(z) b_z^*b_z],\end{align}
which due to the commutation relations between $b_z$ and $b_z^*$ gives \eqref{HRf-com}.
\end{proof}
 Taking $f(x)\equiv 1$ in \eqref{HRf-com}, we arrive at 
\begin{corollary}\label{cor:HN-com} $H$ commutes with $N$.\end{corollary}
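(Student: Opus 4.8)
The plan is to obtain this as an immediate specialization of Lemma \ref{lem:HRf-com} to a constant test function. First I would take $f\equiv 1$, which is bounded and measurable, so that the form identity \eqref{HRf-com} applies on $\mathcal{D}(H)\cap\mathcal{D}(N)$. With this choice the underlying one-particle operator $f(x)$ becomes the identity $\mathbf{1}$ on $\ell^2(\Lambda)$, and since $\mathrm{d}\Gamma(a)=\sum_{x,y}a_{xy}b_x^*b_y$ with $a_{xy}=\delta_{xy}$, we recognize $\mathrm{d}\Gamma(\mathbf{1})=\sum_x b_x^*b_x=\sum_x n_x=N$. Hence the left-hand side of \eqref{HRf-com} is precisely $[H,N]$.

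Next I would read off the right-hand side: every summand carries the factor $f(x)-f(y)=1-1=0$, so the entire sum vanishes identically. Lemma \ref{lem:HRf-com} therefore delivers $[H,N]=0$ as a quadratic form on $\mathcal{D}(H)\cap\mathcal{D}(N)$.

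Finally, to promote this form identity to genuine commutation of the two self-adjoint operators $H$ and $N$, I would appeal to the particle-number grading already exploited in the proof of Proposition \ref{prop:sa}: the operator $H$ leaves every $n$-particle sector $\mathcal{F}_n=\mathrm{Ran}(\chi_{N=n})$ invariant, while $N$ acts on $\mathcal{F}_n$ as the scalar $n$. Relative to the orthogonal decomposition $\mathcal{F}=\bigoplus_{n=0}^\infty\mathcal{F}_n$ both operators are block-diagonal, $H=\bigoplus_n H_n$ and $N=\bigoplus_n n\,\mathrm{id}_{\mathcal{F}_n}$, which makes them strongly commuting and yields $[H,N]=0$ in the operator sense.

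I do not anticipate any genuine obstacle: the computation itself is trivial once Lemma \ref{lem:HRf-com} is in hand, and the only point requiring a word of care is the passage from the bounded-test-function form identity to strong commutativity. That upgrade is supplied directly by the block structure of $H$ with respect to the $N$-eigenspaces, so no domain or approximation argument beyond what the appendix already records is needed.
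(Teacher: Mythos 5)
Your proof is correct and follows exactly the paper's route: the paper likewise obtains the corollary by taking $f\equiv 1$ in Lemma \ref{lem:HRf-com}, so that $\mathrm{d}\Gamma(\mathbf{1})=N$ and the right-hand side of \eqref{HRf-com} vanishes. Your additional remark upgrading the form identity to strong commutation via the block-diagonal structure of $H$ over the sectors $\mathcal{F}_n$ is a sensible (and non-circular) elaboration of a point the paper leaves implicit.
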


\section{Proof of Lemma \ref{lm:taylor}}\label{app:lemma}

Recall that $c> v>\kappa$ and that $\cE$ denotes the set of functions  $0\le f\in \mathrm{C}^\infty(\R)$, supported in $\R^+=(0,\infty)$ and
satisfying $f(\lam)=1$ for $\lam\ge c-v$,  and $f^\prime\ge 0$, with $\sqrt{f'}\in \mathrm{C}^\infty(\R)$. We say a function $h$ is {\it admissible} if it is smooth, non-negative with $\supp h\subset (0, c-v)$ and $\sqrt{h}\in \mathrm{C}^\infty(\R)$. 
Note that if $h$ is admissible, then $h \le C f'$, with $f\in\cE$. Indeed, writing
\[f_1(\lam)=\int_{-\infty}^\lam h(s)ds,\ \text{ we have }\ f_1/\int_{-\infty}^\infty h(s)ds\in \cE.\]  
Similarly, if $f_1,f_2\in\cE$, then $f_1+f_2\lesssim f_3$ for some $f_3\in\cE$.
If $g,g_1\in\mathrm{C}_0^\infty(\R)$, we write $g\prec g_1$ if $g_1=1$ on $\mathrm{supp}(g)$.

\begin{proof}[Proof of Lemma \ref{lm:taylor}]
We use a Taylor expansion
\begin{align*}
f(x)-f(y)=\sum_{k=1}^n \frac{ (x-y)^k }{ k! } f^{(k)}(x) + \mathcal{O}( (x-y)^{n+1} ).
\end{align*}
The term for $k=1$ is rewritten as
\begin{align*}
(x-y)f'(x) &= (x-y)u(x)u(y) + (x-y)u(x)(u(x)-u(y)).
\end{align*}
If $n=1$, the lemma follows. Now, assuming $n\ge2$, we use again a Taylor expansion for the second term in the previous equation, which yields
\begin{align*}
& (x-y)u(x)(u(x)-u(y)) \\
&= \sum_{\ell=1}^{n-1} \frac{ (x-y)^{\ell+1} }{ \ell! } u(x)u^{(\ell)}(x) + \mathcal{O}( (x-y)^{n+1} ).
\end{align*}
Combining the previous equations gives
\begin{align*}
f(x)-f(y)&=(x-y)u(x)u(y)+\sum_{k=2}^n (x-y)^k \Big ( \frac{  f^{(k)}(x) }{ k! } + \frac{ u(x)u^{(k-1)}(x) }{ (k-1)! } \Big ) \\
&\quad+ \mathcal{O}( (x-y)^{n+1} ).
\end{align*}
Consider the term for $k=2$. Let $v_2\in\mathrm{C}_0^\infty(\mathbb{R})$ be such that $\mathrm{supp}(v_2)\subset(0,c-v)$ and $f'\prec v_2$. We write
\begin{align*}
&(x-y)^2 \Big ( \frac{  f^{(2)}(x) }{ 2 } +  u(x)u'(x) \Big ) \\
&=(x-y)^2 v_2(x) \Big ( \frac{  f^{(2)}(x) }{ 2 } +  u(x)u'(x) \Big ) v_2(y) \\
&\quad+ (x-y)^2 v_2(x) \Big ( \frac{  f^{(2)}(x) }{ 2 } +  u(x)u'(x) \Big ) (v_2(x)-v_2(y)) \\
&=(x-y)^2 v_2(x) \Big ( \frac{  f^{(2)}(x) }{ 2 } +  u(x)u'(x) \Big ) v_2(y) \\
&\quad+ v_2(x) \Big ( \frac{  f^{(2)}(x) }{ 2 } +  u(x)u'(x) \Big ) \sum_{\ell=1}^{n-2} \frac{ (x-y)^{\ell+2} }{ \ell! } v_2^{(\ell)}(x) + \mathcal{O}((x-y)^{n+1}).
\end{align*}
Let $w_2(x)=2^{-1}f^{(2)}(x)+u(x)u'(x)$. We have shown that
\begin{align*}
f(x)-f(y)&=(x-y)u(x)u(y)+(x-y)^2v_2(x)w_2(x)v_2(y) \\
&\quad+\sum_{k=3}^n (x-y)^k \Big ( \frac{  f^{(k)}(x) }{ k! } + \frac{ u(x)u^{(k-1)}(x) }{ (k-1)! }\\
&\quad\quad+ v_2(x) \Big ( \frac{  f^{(2)}(x) }{ k! } +  u(x)u'(x) \Big) \frac{ v_2^{(k-2)}(x)}{(k-2)!} \Big ) + \mathcal{O}( (x-y)^{n+1} ).
\end{align*}
Repeating the same procedure iteratively for the terms corresponding to $k \in \{3,\dots,n\}$ in the sum above, we see that there are smooth non-negative functions $v_k$, $2\le k\le n$, such that $\mathrm{supp}(v_k)\subset(0,c-v)$, $f'\prec v_k$ and
\begin{align*}
f(x)-f(y)&=(x-y)u(x)u(y)+\sum_{k=2}^n (x-y)^k v_k(x) w_k(x) v_k(y) \\
&\quad+ \mathcal{O}( (x-y)^{n+1} ) ,
\end{align*}
where $w_k$ are smooth functions (depending on $f^{(\ell)}$, $1\le\ell\le k$) such that $\mathrm{supp}(w_k)\subset(0,c-v)$. Let
\begin{align*}
h_k(x,y):= v_k(x) w_k(x) v_k(y).
\end{align*}
Then $|h_k(x,y)|\lesssim v_k(x) v_k(y)\lesssim (\tilde f_k')^{1/2}(x) (\tilde f_k')^{1/2}(y)$ for some $\tilde f_k\in\cE$ (since $v_k^2$ are admissible functions). Hence the lemma is proven.
\end{proof}

\section{Proof of Theorem \ref{thm:max-vel-est-annuli}}\label{app:annuli}

As in the proof of Theorem \ref{thm:max-vel-est}, we fix $c> \kappa$ and $v=(c+\kappa)/2$. Recall that $\cE$ denotes the set of functions  $0\le f\in \mathrm{C}^\infty(\R)$, supported in $\R^+=(0,\infty)$ and
satisfying $f(\lam)=1$ for $\lam\ge c-v$,  and $f^\prime\ge 0$, with $\sqrt{f'}\in \mathrm{C}^\infty(\mathbb{R})$, while $\cG$ denotes the set of functions $g\in\mathrm{C}^\infty(\R)$ of the form $g=f(-\cdot)$ with $f\in\cE$.

For $f\in\cE$ and $g\in\cG$, we will consider the time-dependent observable 
\begin{align}\label{j1-annuli}
\Phi_s(t) & = \mathrm{d}\Gamma\big ( f(|x|^+_{ts})g(|x|^-_{ts}) \big ), 
\end{align}
where
\begin{align*}
 |x|^+_{ts} :=s^{-1}(|x| -b -v t),\ \quad |x|^-_{ts} :=s^{-1}(|x| - (2\rho-b) +v t).
  \end{align*}

The following proposition corresponds to Proposition \ref{prop:propag-est1} used in the proof of Theorem \ref{thm:max-vel-est}.

\begin{proposition}\label{prop:propag-est1-annuli} 
Suppose that \eqref{eq:kappa_J} is satisfied for some integer $p \ge 1$. For all $c>\kappa$, $f\in \cE$, $g\in\cG$ and any integer $n\le p -1$, there are $j_k\in \cE$, $\ell_k\in\cG$, $2\le k\le n$ and $C>0$ such that, for all $b,t,s>0$ satisfying $0\le t\le s\le c^{-1}(\rho-b)$, we have
\begin{align}
&\int_0^t \big \lan \mathrm{d}\Gamma\big ( f'(|x|^+_{rs} ) - g'(|x|^-_{rs}) \big )\big \ran_r dr\, 
 \le C \Big( 
  s \big\lan \mathrm{d}\Gamma\big ( f(|x|^+_{0s}) g(|x|^-_{0s}) \big ) \big \ran_0 \notag \\
   &\quad+ \sum_{k=2}^n s^{-k+2} \big \lan \mathrm{d}\Gamma\big ( j_k(|x|^+_{0s}) \ell_k(|x|^-_{0s})  \big ) \big \ran_0 +  t s^{-n}\langle N\rangle_0\Big), \label{propag-est31-annuli} 
\end{align}
where the sum should be dropped if $n=0,1$.
\end{proposition} 

\begin{proof}[Proof of Proposition \ref{prop:propag-est1-annuli}]
The structure of the proof is analogous to that of Proposition \ref{prop:propag-est1}. For $n=0$, the result is obvious. Let $n\in\mathbb{N}$ and consider the time-dependent observable \eqref{j1-annuli}. We use \eqref{eq-basic}
and compute the Heisenberg derivative $D\Phi_s(t)$. We have
\begin{align} \label{eq:deriv-annuli}
{\partial\over{\partial t}}\Phi_s(t) = -s^{-1}v \, \big( \mathrm{d}\Gamma\big ( f^\prime(|x|^+_{ts})g(|x|^-_{ts}) - f(|x|^+_{ts})g'(|x|^-_{ts}) \big).
\end{align}
We observe that $f^\prime(|x|^+_{ts})=0$ if $|x|\ge b+vt+(c-v)s$, while $g(|x|^-_{ts})=1$ if $|x|\le2\rho-b-vt+(v-c)s$. Since $t\le s\le c^{-1}(\rho-b)$, we have $b+vt+(c-v)s\le 2\rho-b-vt+(v-c)s$ and therefore
\begin{equation}\label{eq:annuli1}
f^\prime(|x|^+_{ts})g(|x|^-_{ts})=f^\prime(|x|^+_{ts}).
\end{equation}
Likewise,
\begin{equation}
f(|x|^+_{ts})g^\prime(|x|^-_{ts})=g^\prime(|x|^-_{ts}),
\end{equation}
and hence
\begin{align} \label{eq:deriv-annuli2}
{\partial\over{\partial t}}\Phi_s(t) = -s^{-1}v \,  \mathrm{d}\Gamma\big ( f^\prime(|x|^+_{ts}) - g'(|x|^-_{ts}) \big).
\end{align}

Next we compute, using Lemma \ref{lem:HRf-com}, 
\begin{align}
i\big [H , \Phi_s(t) \big ] &= \sum_{x,y\in\Lambda, x\neq y} J_{xy} \{f(|x|^+_{ts})g(|x|^-_{ts})-f(|y|^+_{ts})g(|y|^-_{ts}) \}b_x^*b_y . \label{eq:commut-annuli}
\end{align}
We write
\begin{align}
& f(|x|^+_{ts})g(|x|^-_{ts})-f(|y|^+_{ts})g(|y|^-_{ts}) \notag \\
&=\big(f(|x|^+_{ts})-f(|y|^+_{ts})\big)g(|x|^-_{ts})+f(|y|^+_{ts})\big(g(|x|^-_{ts})-g(|y|^-_{ts})\big) . \label{eq:commut-annuli2}
\end{align}
Proceeding as in the proof of Proposition \ref{prop:propag-est1}, using in particular Lemma \ref{lm:taylor} and the Cauchy-Schwarz inequality, we obtain
\begin{align}
&\Big | \Big \langle \sum_{x,y\in\Lambda, x\neq y}  J_{xy} \big(f(|x|^+_{ts})-f(|y|^+_{ts})\big)g(|x|^-_{ts})b_x^*b_y \Big\rangle_t \Big | \notag \\
&\le  \kappa s^{-1} \big\langle \mathrm{d}\Gamma\big ( f^\prime(|x|^+_{ts}) \big) \big\rangle_t 
+    \sum_{k=2}^n \kappa^{(k)}_J C_{f,k} s^{-k} \big\langle \mathrm{d}\Gamma\big ( \tilde f_k^\prime(|x|^+_{ts}) \big) \big\rangle_t  \notag \\
&\quad+ \kappa_J^{(n+1)} C_{f,n} s^{-n-1}\langle N\rangle_0, \label{eq:annuli2}
\end{align}
for some functions $\tilde f_k\in\cE$. Note that to obtain \eqref{eq:annuli2} we used again \eqref{eq:annuli1} and that, likewise, $\tilde f^\prime_k(|x|^+_{ts})g(|x|^-_{ts})= \tilde f^\prime_k(|x|^+_{ts}).$ In the same way, we have that
\begin{align}
&\Big | \Big \langle \sum_{x,y\in\Lambda, x\neq y}  J_{xy} f(|y|^+_{ts})\big(g(|x|^-_{ts})-g(|y|^-_{ts})\big)b_x^*b_y \Big\rangle_t \Big | \notag \\
&\le  -\kappa s^{-1} \big\langle \mathrm{d}\Gamma\big ( g^\prime(|x|^-_{ts}) \big) \big\rangle_t 
-    \sum_{k=2}^n \kappa^{(k)}_J C_{g,k} s^{-k} \big\langle \mathrm{d}\Gamma\big ( \tilde g_k^\prime(|x|^-_{ts}) \big) \big\rangle_t  \notag \\
&\quad+ \kappa_J^{(n+1)} C_{g,n} s^{-n-1}\langle N\rangle_0, \label{eq:annuli3}
\end{align}
for some functions $g_k\in\cG$.

The previous inequalities together with \eqref{eq:deriv-annuli} yield
\begin{align*}
\lan D\Phi_s(t)\ran_t\  &\le (\kappa-v) s^{-1} \big\langle \mathrm{d}\Gamma\big ( f^\prime(|x|^+_{ts}) - g^\prime(|x|^-_{ts}) \big) \big\rangle_t \\
&\quad + C_{f,g,n} \sum_{k=2}^n \kappa^{(k)}_J s^{-k} \big\langle \mathrm{d}\Gamma\big ( \tilde f^\prime_k(|x|^+_{ts})-\tilde g^\prime_k(|x|_{ts}) \big) \big\rangle_t \\
&\quad +C_{f,g,n} \kappa_J^{(n+1)} s^{-n-1}\langle N\rangle_0.
\end{align*}
Hence, Eqs. \eqref{dt-Heis} and \eqref{eq-basic} and the definition $\Phi_s(t)= \mathrm{d}\Gamma\big ( f(|x|^+_{ts})g(|x|^-_{ts}) \big )$ give  
\begin{align} \label{propag-est2-annuli} 
&\big\lan \mathrm{d}\Gamma\big ( f(|x|^+_{ts})g(|x|^-_{ts}) \big ) \big\ran_t+(v-\kappa)s^{-1}\int_0^t \big\lan \mathrm{d}\Gamma\big ( f'(|x|^+_{rs}) - g'(|x|^-_{rs}) \big )\big \ran_r dr\notag\\
& \le \big\lan \mathrm{d}\Gamma\big ( f(|x|^+_{0s} g(|x|^-_{0s}))  \big ) \big\ran_0+C_{f,g,n} \sum_{k=2}^n s^{-k}\int_0^t \big\lan  \mathrm{d}\Gamma\big ( \tilde f'_k(|x|^+_{rs})-\tilde g'_k(|x|^-_{rs}) \big ) \big\ran_r dr\notag\\
& \quad + C_{f,g,n} t s^{-n-1}\langle N\rangle_0. 
\end{align}
Since $\kappa < v$, we can then iterate the process and conclude as in the proof of Proposition \ref{prop:propag-est1}.
\end{proof}

\begin{proof}[End of the proof of Theorem \ref{thm:max-vel-est-annuli}]
Since $f\in \cE$ and $g\in\cG$, we have
\begin{align}\label{local-est4-annuli}
\big \lan \mathrm{d}\Gamma\big ( f(|x|^+_{0s}) g(|x|^-_{0s}) \big ) \big\ran_0 \le \big\langle \mathrm{d}\Gamma( \chi_{|x|>b}\chi_{|x|<2\rho-b} )\rangle_0 .\end{align}

Next, retaining the first term in \eqref{propag-est2-annuli}, dropping the second one and  using Proposition \ref{prop:propag-est1-annuli}, we conclude as in the proof of Theorem \ref{thm:max-vel-est} that
\begin{align} 
\big\lan \mathrm{d}\Gamma\big ( f(|x|^+_{ts}) g(|x|^-_{ts}) \big ) \big\ran_t &\le (1+C_{f,g,c,n}s^{-1}) \langle \mathrm{d}\Gamma( \chi_{b<|x|<2\rho-b})\rangle_0 \notag \\
&\quad+  C_{f,g,c,n} s^{-n}  \langle N\rangle_0, \label{propag-est4-annuli} 
\end{align}
for any $0\le t\le s\le c^{-1}(\rho-b)$ and $n\le p -1$. Now, for any $f\in \cE $, we have $f(\frac{\cdot-b-vt}{s})=1$ on 
$[ b+vt +(c-v) s , \infty)$. 
For $\beta\rho \ge b + cs$ and $s\ge t$ , we have $[\beta\rho,\infty) \subset [ b+vt +(c-v) s , \infty)$.
Hence 
\begin{equation*}
\chi_{|x|>\beta\rho}\le f(|x|^+_{ts}).
\end{equation*}
Likewise,
\begin{equation*}
\chi_{|x|<(2-\beta)\rho}\le g(|x|^-_{ts}).
\end{equation*}
Therefore, choosing $s=c^{-1}(\beta\rho-b)$, we conclude that, for $\beta\rho \ge b + ct$,
\begin{align*}
\big\lan \mathrm{d}\Gamma( \chi_{\beta\rho<|x|<(2-\beta)\rho}) \big\ran_t &\le \big\lan \mathrm{d}\Gamma\big ( f(|x|^+_{ts})g(|x|^-_{ts}) \big ) \big\ran_t \\
&\le (1+C_{f,g,c,n}s^{-1}) \langle \mathrm{d}\Gamma( \chi_{b<|x|<2\rho-b})\rangle_0 +  C_{f,g,c,n} s^{-n}  \langle N\rangle_0.
\end{align*}
This concludes the proof.
\end{proof}

\section{Proof of Theorem \ref{thm:MVE-SM2}}
\label{sec:thm3}

In this section, we explain how to modify the proof of Theorem \ref{thm:MVE-SM} in order to obtain Theorem \ref{thm:MVE-SM2}. We consider $\varphi\in\mathcal{D}(N^{\frac12(1+\nu_A+\nu_B)})\subset\cF$ and estimate $\langle\varphi,e^{-itH}Ae^{itH}B\varphi\rangle$ by using  \eqref{eq:new1} and \eqref{eq:atAB_31} as in the proof of Theorem \ref{thm:MVE-SM}. To shorten formulas, we use in this proof the notation
\begin{equation*}
\bar N := N+1.
\end{equation*}

Instead of \eqref{eq:P1_0},  the third term of \eqref{eq:atAB_31} is estimated as:
\begin{align}
\big\|N^{\frac12}e^{-irH}A e^{itH}B\varphi\big\|&=\big\|N^{\frac12}A e^{itH}B\varphi\big\|\notag\\
&\le\|A\|_{1}\big\|\bar N^{\frac12(1+\nu_A)} e^{itH}B\varphi\big\|\notag\\
&=\|A\|_{1}\big\| \bar N^{\frac12(1+\nu_A)} B\varphi\big\|\notag\\
&\le\|A\|_{1} \|B\|_{1+\nu_A}\big\|\bar N^{\frac12(1+\nu_A+\nu_B)}\varphi\big\|, \label{eq:P1}
\end{align}
where we used the fact that $N$ commutes with $H$ and the unitarity of $e^{itH}$ in the equalities, and \eqref{eq:compl} in the inequalities.

Instead of \eqref{eq:l2_1}, the last term in \eqref{eq:atAB_31} is estimated using \eqref{eq:P1}, yielding
\begin{align} 
&\big\| (N_{\alpha,\rho}^{(1)} )^{\frac12} e^{-irH} A e^{itH}B\varphi\big \|\notag\\
& \le C_{c,p,\delta_0} \Big( \big\| (N_{\alpha,\rho}^{(2)} )^{\frac12} A e^{itH}B\varphi\big\| \notag\\
&\qquad\qquad+(\rho-b)^{-\frac{p-1}{2}} \|A\|_{1} \|B\|_{1+\nu_A}\big\|\bar N^{\frac12(1+\nu_A+\nu_B)}\varphi\big\|\Big) . \label{eq:l2_1_app}
\end{align}

To control the first term in the right-hand-side of \eqref{eq:l2_1_app}, we uses that $A$ commutes with $N_{\alpha,\rho}^{(2)}$ as in \eqref{eq:comm_A_N} and next use \eqref{eq:compl}, which gives
\begin{align}
&\big\| (N_{\alpha,\rho}^{(2)} )^{\frac12} A e^{itH}B\varphi\big\| \le\|A\|_{0}\big\| (N_{\alpha,\rho}^{(2)} )^{\frac12} e^{itH} \bar N^{\frac{\nu_A}{2}}B\varphi\big\|,
\end{align}
where we also used that $N$ commutes with $N_{\alpha,\rho}^{(2)}$ and $H$. Applying Theorem \ref{thm:max-vel-est-annuli} as in \eqref{eq:l2_2}, we obtain
\begin{align}
&\big\| (N_{\alpha,\rho}^{(2)} )^{\frac12} e^{itH}\bar N^{\frac{\nu_A}{2}}B\varphi\big\|\notag\\
& \le C_{c,p,\delta_0} \Big(\big\| (N_{\alpha,\rho}^{(3)} )^{\frac12} \bar N^{\frac{\nu_A}{2}}B\varphi\big\| + (\rho-b)^{-\frac{p-1}{2}}\big\| \bar N^{\frac12(1+\nu_A)} B\varphi\big\|\Big) . \label{eq:l2_2_app}
\end{align}
As above, we have
\begin{equation}
\big\| \bar N^{\frac12(1+\nu_A)} B\varphi\big\|\le\|B\|_{1+\nu_A}\big\|\bar N^{\frac12(1+\nu_A+\nu_B)}\varphi\big\|,
\end{equation}
and using that $B$ commutes with $N_{\alpha,\rho}^{(3)}$ as in \eqref{eq:l2_2_0}, we can write
\begin{align}
\big\| (N_{\alpha,\rho}^{(3)} )^{\frac12} \bar N^{\frac{\nu_A}{2}}B\varphi\big\| &=\big\| \bar N^{\frac{\nu_A}{2}}B(N_{\alpha,\rho}^{(3)} )^{\frac12} \varphi\big\| \notag\\
&\le \|B\|_{\nu_A} \big\| (N_{\alpha,\rho}^{(3)} )^{\frac12} \bar N^{\frac12(\nu_A+\nu_B)} \varphi\big\|.
\end{align}
Moreover, recalling that $m_\rho(x)=\min(|x|,2\rho-|x|)$, we have $m_\rho(x)\ge(1-\alpha)\rho+b$ for all $x\in\mathcal{C}_{b+(1-\alpha)\rho,(1+\alpha)\rho-b}$ and hence, since $N_{\alpha,\rho}^{(3)}=N_{\mathcal{C}_{b+(1-\alpha)\rho,(1+\alpha)\rho-b}}$,
\begin{align}
&\big\| (N_{\alpha,\rho}^{(3)} )^{\frac12}\bar N^{\frac12(\nu_A+\nu_B)} \varphi\big\|\notag\\
&\le\big((1-\alpha)\rho+b\big)^{-\frac{n}{2}}\big\| \bar N^{\frac12(\nu_A+\nu_B)}\mathrm{d}\Gamma\big(m_\rho^n\chi_{\mathcal{C}_{b+(1-\alpha)\rho,(1+\alpha)\rho-b}}\big)^{\frac12} \varphi\big\| \notag\\
&\le\big((1-\alpha)\rho+b\big)^{-\frac{n}{2}}\big\| \bar N^{\frac12(\nu_A+\nu_B)}(G_\rho^{(n)})^{\frac12} \varphi\big\|, \label{eq:l2_4}
\end{align}
for all $n\in\mathbb{N}_0$, where we used that $\alpha<1$ in the second inequality and set
\begin{equation*}
G_\rho^{(n)}:=\mathrm{d}\Gamma\big(m_\rho^n\chi_{\mathcal{C}_{b,2\rho-b}}\big).
\end{equation*}
 Putting together Eq. \eqref{eq:l2_1_app}--\eqref{eq:l2_4}, we have proven that
\begin{align}
&\big\| (N_{\alpha,\rho}^{(1)} )^{\frac12} e^{-irH} A e^{itH}B\varphi\big \|\notag\\
&\le C_{c,p,\delta_0}(\rho-b)^{-\frac{p-1}{2}} \vertiii{A}_{\nu_B} \vertiii{B}_{\nu_A} \big\| \bar N^{\frac12(1+\nu_A+\nu_B)}\varphi\big\| \notag \\
&\quad+C_{c,p}\big((1-\alpha)\rho+b\big)^{-\frac{p-1}{2}} \vertiii{A}_{\nu_B} \vertiii{B}_{\nu_A}\big\| \bar N^{\frac12(\nu_A+\nu_B)}(G_\rho^{(p-1)})^{\frac12} \varphi\big\|, \label{eq:estim_2_app}
\end{align}
where, recall, $\vertiii{A}_{\nu}=\max_{0\le n\le\nu+1}\|A\|_n$.

The term $\| (N_{\alpha,\rho}^{(1)} )^{\frac12} U_\rho^*e^{i(t-r)\tilde H} U_\rho\varphi\|$ in \eqref{eq:atAB_31} is estimated in the same way as in \eqref{eq:estim_3_0}, using in addition the function $m_\rho$ as in the previous equation. This gives
\begin{align}
&\big\| (N_{\alpha,\rho}^{(1)} )^{\frac12} U_\rho^*e^{i(t-r)\tilde H} U_\rho\varphi \big \| \notag\\
&\le C_{c,p,\delta_0}\big((1-\alpha)\rho+b\big)^{-\frac{p-1}{2}}\Big(\big\|N^{\frac12}\varphi\big\|+\big\| (G_\rho^{(p-1)})^{\frac12} \varphi\big\|\Big). \label{eq:estim_3}
\end{align}

Arguing as in \eqref{eq:atAB_33_0}, we then deduce from the previous equations that 
\begin{align}
&\big|\big\langle\varphi, \mathrm{Rem}_1 \,\varphi\big\rangle\big| \le C_{c,p,\delta_0,b} \vertiii{A}_{\nu_B} \vertiii{B}_{\nu_A} t (\rho-b)^{-p+1} \notag  \\
&\quad \times \big ( \big\langle \varphi , \bar N^{1+\nu_A+\nu_B} \varphi \big\rangle + \big\langle\varphi, \bar N^{\nu_A+\nu_B}(G_\rho^{(p-1)})\varphi\big\rangle \big) .\label{eq:atAB_33}
\end{align}
One proceeds in the same way to estimate the remainder term $\mathrm{Rem}_2$ in \eqref{eq:new1_1} and then concludes exactly as in the proof of Theorem \ref{thm:MVE-SM}. The proof of \eqref{eq:general_loc} follows in the same way as the proof of Theorem \ref{thm:localapprox}, taking $B=\one$.
\qed

\begin{proof}[Proof of \eqref{max-vel-estSM_2nd}]
The proof of \eqref{max-vel-estSM_2nd} is easier. The only difference comes from the last term in \eqref{eq:atAB_31}, which is estimated using that 
\begin{equation*}
\big\| (N_{\alpha,\rho}^{(1)} )^{\frac12} \bar N^{-\frac12} \big\| \le C.
\end{equation*}
Together with \eqref{eq:P1}, this gives
\begin{align*} 
&\big\| (N_{\alpha,\rho}^{(1)} )^{\frac12} e^{-irH} A e^{itH}B\varphi\big \| \le C \|A\|_{1} \|B\|_{1+\nu_A}\big\|\bar N^{\frac12(1+\nu_A+\nu_B)}\varphi\big\| , 
\end{align*}
instead of \eqref{eq:estim_2_app}. The rest of the proof is identical.
\end{proof}

\section{Proof of the bound on quantum state transfer (Corollary \ref{cor:qst})}\label{app:qst}

We follow the line of argumentation from \cite{EppWhal} and adapt it to bosons. Two important differences are that (a) we work with the fidelity throughout since it is better suited for form bound, i.e., we avoid the Fuchs-van der Graaf inequality, and (b) we replace the use of formula (1) by an application of \eqref{eq:general_loc}.

\begin{proof}[Proof of Corollary \ref{cor:qst}]
Note that the localized operator $(\alpha_t(A))_\rho\equiv [A_t]_\rho$ introduced in Theorem \ref{thm:MVE-SM2} (together with Remark \ref{rmk:extensions} (iii) since $A$ is unitary) is localized in $\mathcal B(0,\rho)\subset Y^c$ and so conjugation by it does not affect $\mathrm{Tr}_{Y^c}$, leading to
$$
\begin{aligned}
F\left( \mathrm{Tr}_{Y^c}(\gamma_t), \mathrm{Tr}_{Y^c}A_t\gamma_t A_t^*)\right)
&=
F\left(\mathrm{Tr}_{Y^c} \left( [A_t]_\rho\gamma_t [A_t]^*_\rho\right), \mathrm{Tr}_{Y^c}\left( A_t\gamma_t A_t^*\right)\right)\\
&\geq
F\left( [A_t]_\rho\gamma_t [A_t]^*_\rho, A_t\gamma_t A_t^*\right)
\end{aligned}
$$
where the estimate holds by the data processing inequality for the fidelity. 

Note that $[A_t]_\rho$ is unitary and satisfies $[A_t]_\rho^*=[A_t^*]_\rho$, see \eqref{eq:def_alpha_trho}. Since $\gamma$ is pure, $\gamma_t=\vert \phi_t\rangle\langle \phi_t\vert$ is also pure and the fidelity becomes
$$
F\left( [A_t]_\rho\gamma_t [A_t]^*_\rho, A_t\gamma_t A_t^*\right)
=|\langle [A_t]_\rho\phi_t,A_t\phi_t\rangle|
\geq 1- |\langle \phi_t,(A_t-[A_t]_\rho)\phi_t\rangle|.
$$

We recall \eqref{eq:general_loc} which in the present notation says that for any $\varphi$,
\begin{align}
\big|\big \langle \varphi ,\left(A_t-[A_t]_\rho\right)\varphi \big \rangle \big | 
\le C t (\rho-b)^{1-p} \vertiii{A}_0 \big\langle \varphi , M_\rho \varphi \big\rangle .
\end{align}
This proves Corollary \ref{cor:qst}.
\end{proof}

\medskip

\begin{small}
\noindent \textbf{Acknowledgements.} The research of IMS was supported in part by NSERC grant 7901. We are grateful to the anonymous referees for their constructive remarks.
\end{small}

\end{document}